\newcommand{\E}{\mathbb{E}}
\newcommand{\Var}{\mathrm{Var}}
\newcommand{\Cov}{\mathrm{Cov}}
\newcommand{\Cor}{\mathrm{Cor}}
\newcommand{\R}{\mathbb{R}}
\newcommand{\W}{\mathcal{W}}
\renewcommand{\P}{\mathbb{P}}
\newcommand{\td}{\tilde}
\newcommand{\lb}{\left(}
\newcommand{\rb}{\right)}
\newcommand{\eps}{\epsilon}
\renewcommand{\td}{\tilde}
\newcommand{\RY}{R^{{\scriptscriptstyle Y}}}
\newcommand{\RW}{R^{{\scriptscriptstyle W}}}
\newcommand{\nox}{{\scriptscriptstyle\mathrm{nox}}}
\newcommand{\inter}{{\scriptscriptstyle\mathrm{int}}}
\newcommand{\sls}{{\scriptscriptstyle\mathrm{2SLS}}}
\newcommand{\did}{{\scriptscriptstyle\mathrm{DiD}}}
\newcommand{\mdid}{{\scriptscriptstyle\mathrm{mDiD}}}
\newcommand{\rdd}{{\scriptscriptstyle\mathrm{RDD}}}
\newcommand{\mrdd}{{\scriptscriptstyle\mathrm{mRDD}}}
\newcommand{\indep}{\protect\mathpalette{\protect\independenT}{\perp}}
\newcommand{\ta}{\texttt{a}}
\newcommand{\tn}{\texttt{n}}
\newcommand{\tc}{\texttt{c}}
\newcommand{\rate}{\tau_{\mathrm{r}}}
\def\independenT#1#2{\mathrel{\rlap{$#1#2$}\mkern2mu{#1#2}}}
\DeclareMathOperator*{\argmin}{argmin}
\newtheorem{theorem}{Theorem}[section]
\newtheorem{lemma}{Lemma}[section]
\newtheorem{assumption}{Assumption}[section]
\newtheorem{example}{Example}[section]
\newcommand\blfootnote[1]{%
  \begingroup
  \renewcommand\thefootnote{}\footnote{#1}%
  \addtocounter{footnote}{-1}%
  \endgroup
}
\begin{document}
\title{Causal Interpretation of Regressions With Ranks}
\author{Lihua Lei \thanks{Graduate School of Business and Department of Statistics (by courtesy), Stanford University}}
\maketitle

\blfootnote{\hspace{-0.55cm}The author would like to thank Raj Chetty for the helpful discussion.}

\begin{abstract}
In studies of educational production functions or intergenerational mobility, it is common to transform the key variables into percentile ranks. Yet, it remains unclear what the regression coefficient estimates with ranks of the outcome or the treatment. In this paper, we derive effective causal estimands for a broad class of commonly-used regression methods, including the ordinary least squares (OLS), two-stage least squares (2SLS), difference-in-differences (DiD), and regression discontinuity designs (RDD). Specifically, we introduce a novel primitive causal estimand, the Rank Average Treatment Effect (rank-ATE), and prove that it serves as the building block of the effective estimands of all the aforementioned econometrics methods. For 2SLS, DiD, and RDD, we show that direct applications to outcome ranks identify parameters that are difficult to interpret. To address this issue, we develop alternative methods to identify more interpretable causal parameters. 
\end{abstract}

\section{Introduction}
Rank transformations are often applied to variables without natural units, such as test scores \citep[e.g.][]{krueger1999experimental}), to enhance cross-outcome comparability, or variables with heavy tails, such as income \citep[e.g.][]{chetty2014land,deutscher2023measuring, mogstad2024inference}, to avoid excessive influences of extreme observations. In both cases, it was found that the rank transformation tends to reduce sensitivity to specifications \citep[e.g.][]{dahl2008association}. A recent survey by \cite{chetverikov2023inference} found 119 papers that contain rank regressions in the top Economics journals published since 2013.

Despite the popularity of rank transformations in applied research, the causal interpretation of regression coefficients remains unclear when dependent or independent variables are transformed into ranks. For example, \cite{chetty2020race} obtained an estimate of the relative mobility, defined as the coefficient of the parents' income rank in the regression of the child's income rank, of $0.35$ using the full US sample that pools all races. This was interpreted as ``a 10 percentile increase in parents’ rank is associated with a 3.5 percentile increase in children’s rank on average''. It is known that this coefficient is equivalent to the Spearman's rank correlation coefficient \citep{spearman1904proof} between the parents' and child's incomes, assuming that the income distribution is continuous  \citep[e.g.][]{chetverikov2023inference}. However, it is unclear if the Spearman's correlation measures an interventional effect in any thought experiment where parents' incomes are randomized or exposed to exogeneous shocks. Ideally, we would want to interpret the coefficient through the distributions of the child's counterfactual incomes had the parents' income been a certain amount. 

For regressions with ranks only in one side, even the effective estimands have yet been understood. For example, \cite{krueger1999experimental} regressed the percentile ranks of test scores on the assignment of class types (i.e., a small class, a regular class, or a regular class with teacher aid) along with other controls using the Tennessee STAR data \citep{boyd2007project}. Through the OLS estimate he found that ``the gap in average performance is about 5 percentile points in kindergarten, 8.6 points in first grade, and 5–6 points in second and third grade.'' While this suggests a positive effect of smaller class sizes, it is harder to interpret the effect size. Ideally, we would want to express this coefficient through the distributions of potential test scores had the student been assigned into each class type. The interpretation can be further complicated for more sophisticated econometric methods such as the two-stage least squares, which is also applied to analyze the effect of small classes on STAR data to address non-compliance \citep{krueger1999experimental}.

In this paper, we derive the effective estimands as well as their causal interpretations for a suite of econometric methods including the ordinary least squares (OLS), two-stage least squares (2SLS), difference-in-differences (DiD), and regression discontinuity designs (RDD) under a variety of standard assumptions in the econometrics literature. Unlike \cite{chetverikov2023inference}, we do not discuss statistical inference for these estimators and many of their results can be directly applied here. Moreover, we focus on continuous outcomes and do not study the effect of tie-breaking for discrete outcomes as in \cite{chetverikov2023inference} and \cite{chetverikov2024csranks}. Our paper also differs from the literature that introduces inference methods for ranks of true parameters from multiple heterogeneous units or populations \citep[e.g.][]{bazylik2021finite,mogstad2024inference,chetverikov2024csranks} in that we focus on the causal interpretations of common econometric methods with sample ranks under the superpopulation framework. 

We start by introducing a primitive causal estimand which we call \emph{the Rank Average Treatment Effect (rank-ATE)}. The estimand had been studied in the statistics and machine learning literature, though in very different contexts. Akin to the standard ATE, it compares the distributions of outcomes under two treatment conditions, though not through the mean difference. We prove that the rank-ATE shares many desirable properties with the ATE and they have the same signs in a variety of scenarios.

For a randomized experiment with a binary treatment, we prove that the OLS with outcomes ranked based on the entire sample identifies the rank-ATE. This result remains if extra covariates that are not affected by the treatment are included. More surprisingly, we show that the effective estimand remains the same if the outcomes are ranked based solely on the treatment (or control) units. We call this property \emph{reference robustness} and show that it is only satisfied by the OLS but not other aforementioned  econometric methods. When the assignment is confounded under the selection on observables (or strong ignorability) assumption, we show that the effective estimand is a weighted average of covariate-specific rank-ATE, in a similar spirit to \cite{angrist1995estimating} and \cite{borusyak2024negative} for linear regressions. 

Moving beyond the binary treatment, we consider general multi-valued or continuous treatments and allow for transformation of the treatment variable in the regression. This includes the rank-rank regression \citep[e.g.][]{dahl2008association, chetty2014land, chetverikov2023inference} and special cases of binscatter regression \citep{cattaneo2024binscatter} as special cases. When the treatment is randomized, we prove that the effective OLS estimand is a weighted average of rank-ATEs that compare the potential outcomes at each level of treatment. Since the weights depend on the scale of the transformed treatment, the estimates for different treatment variables are generally incomparable. We propose a data-driven normalization procedure that restrict the estimand in the same range regardless of the choice of variables. 

For other econometric methods, the effective estimands and causal interpretations are more delicate. To avoid mathematical complications, we focus on the case where the outcomes are transformed into ranks and the treatment variable is binary.  We briefly summarize our findings below.
\begin{itemize}
\item (2SLS) We introduce the rank-local average treatment effect (rank-LATE) as the rank-ATE for compliers, which can be viewed as a natural extension of LATE \citep{imbens1994identification}. When the instrument variable (IV) is binary and randomized, under the standard exclusion restriction and monotonicity assumptions, we show that the 2SLS does not recover the rank-LATE and its estimand can have an opposite sign. Further, we show that the 2SLS does not enjoy reference robustness and ranking based on neither the treated nor control group recovers the rank-LATE. To address this issue, we propose a new ranking procedure based on an estimated potential outcome distribution for compliers and prove that it identifies the rank-2SLS without additional assumptions.
\item (DiD) For simplicity, we focus on the case with one pre-treatment period and one post-treatment period. We start by introducing a natural extension of the standard parallel trend assumption. The rank version is non-nested with the standard version or the counterpart for distributional DiD \citep{havnes2015universal, roth2023parallel}, but implied by the identifying assumption for change-in-changes (CiC, \cite{athey2006identification}) or the copula stability assumption \citep{ghanem2023evaluating}. Nevertheless, we prove that the DiD estimator on outcome ranks does not identify the rank analogue of the average treatment effect on the treated (rank-ATT); instead, the effective estimand is hard to interpret and not necessarily same-signed as the rank-ATT. We then propose a counterfactual ranking method and show the resulting DiD estimator recovers the rank-ATT under the assumptions for CiC \citep{athey2006identification} or distributional DiD \citep{roth2023parallel}. 
\item (RDD) For level outcomes, the estimand is the ATE at the cutoff for sharp RDD \citep{imbens2008regression, lee2010regression}. We prove that, when the outcomes are ranked based on the entire sample, the RDD estimator fails to recover the rank-ATE at the cutoff due to contamination of points further away from the cutoff. By strengthening the mean continuity into a distributional continuity assumption  \citep{imbens2008regression}, we propose a local ranking method that identifies the rank-ATE at the cutoff for sharp RDD without additional assumptions.
\end{itemize}

We conclude the paper by discussing a limitation of the rank-ATE: unlike the standard ATE, it cannot be expressed as an expected individual-level treatment effect. \cite{chen2023logs} argued that the latter is a desirable property for the purpose of causal interpretations. Motivated by this concern, we propose an alternative partially identified estimand. The sharp identified set can be estimated by the standard plug-in methods \citep{fan2010sharp} or the robust dual methods \citep{ji2023model}. 

\subsection{Notation and preliminaries}
Suppose we have $n$ units and, for unit $i\in \{1, \ldots, n\}$, $Y_i$ denotes the outcome of interest, $W_i$ denotes the treatment variable which can be binary or continuous, and $X_i\in \R^{d}$ denotes the vector of covariates. Let $\RY_i$ denote the rank of $Y_i$ among $\{Y_1, \ldots, Y_n\}$, i.e.,
\begin{equation}\label{eq:rank}
  \RY_i = \sum_{j=1}^{n}I(Y_j \le Y_i).
\end{equation}
Let $\hat{F}_{Y,n}$ be the empirical cumulative distribution function (CDF) of $\{Y_1, \ldots, Y_n\}$, i.e.,
\[\hat{F}_{Y,n}(y) = \frac{1}{n}\sum_{j=1}^{n}I(Y_j \le y).\]
Then we can rewrite $\RY_i$ as
\begin{equation}\label{eq:rank_ecdf}
\RY_i = n\hat{F}_{Y,n}(Y_i).
\end{equation}
Throughout the note we assume that $(Y_i, W_i, X_i)$ are i.i.d. where $F_{Y}$ denotes the marginal distribution of $Y_i$, $F_{Y\mid w}$ the conditional distribution of $Y_i$ given $W_i = w$, and $F_{Y\mid w, x}$ the conditional distribution of $Y_i$ given $W_i = w$ and $X_i = x$. When no confusion can arise, we will use $(Y, W, X)$ to denote a generic independent draw from the same distribution as $(Y_i, W_i, X_i)$. Throughout we adopt the standard probability notation $O(\cdot), o(\cdot), O_{\P}(\cdot),$ $o_{\P}(\cdot)$. 

An immediate yet crucial implication of the celebrated Dvoretzky–Kiefer–Wolfowitz inequality \citep{dvoretzky1956asymptotic, massart1990tight} is that
\begin{equation}\label{eq:rank_approx}
  \max_{i}\left|\frac{\RY_i}{n} - F_Y(Y_i)\right|\le \sup_{y\in \R}|\hat{F}_{Y,n}(y) - F_{Y}(y)| = O_\P\lb \frac{1}{\sqrt{n}}\rb.
  \end{equation}
This suggests, and we will make it rigorous, that $\RY_i$ can be replaced by $F_Y(Y_i)$ when the goal is to prove consistency of a parameter estimate. For inferential purposes like hypothesis testing or confidence intervals, more sophiscated techniques would be involved \citep{chetverikov2023inference}.

\section{Rank-ATE: a primitive causal estimand}
Let $F_1$ and $F_0$ be the distributions of a scalar outcome under two treatment conditions. Most point identified treatment effects are functionals of $(F_1, F_0)$ that measure the distributional discrepancy through certain summaries.\footnote{Causal effects as functionals of the joint distribution of potential outcomes are only partially identified without further assumptions since the potential outcomes are never observed simultaneously \citep[e.g.][]{fan2010sharp,ji2023model}.} For example, the standard ATE is the mean difference and the logarithm of the ratio ATE (or the Poisson regression estimand \citep{silva2006log,wooldridge2010econometric}) is the difference of log-means. Here, we define the rank-ATE $\rate(F_1, F_0)$ as follows: 
\begin{equation}
  \label{eq:rank-ATE}
  \rate(F_1, F_0) = \P(Z_1 \ge Z_0) - \frac{1}{2}, \quad \text{where }Z_1 \sim F_1, Z_0 \sim F_0, Z_1\indep Z_0.
\end{equation}
The rank-ATE always takes values in $[-1/2, 1/2]$. Clearly, $\rate(F_1, F_0)$ measures the probability that a randomly drawn sample from $F_1$ is larger than or equal to a randomly drawn sample from $F_0$. Unlike the ATE that requires the existence of finite first moments, the rank-ATE is always well-defined. An equivalent expression for $\rate(F_1, F_0)$ is given by 
  \begin{equation}
    \label{eq:rate_alternative}
    \rate(F_1, F_0) = \E_{Z_1\sim F_1}[F_0(Z_1)] - \frac{1}{2}.
  \end{equation}
 We illustrate this estimand through the following examples.
  \begin{example}\label{ex:symmetric_binary}
    If $Z_1 = \mu_1 + \nu_1$ and $Z_0 = \mu_0 + \nu_0$ for some symmetrically distributed $\nu_1$ and $\nu_0$ that are not necessarily identically distributed,
    \begin{equation}\label{eq:AUC_example1_prob}
      \P(Z_1\ge Z_0) = \P(\nu_1 - \nu_0\ge \mu_0 - \mu_1).
      \end{equation}
      Since $\nu_1$ and $\nu_0$ are independent and symmetrically distributed, so is the difference $\nu_1 - \nu_0$. As a result,
      \begin{equation}\label{eq:AUC_example1_sign}
        \rate(F_1, F_0) > 0 \Longleftrightarrow \mu_1 > \mu_0,
      \end{equation}
      Thus the rank-ATE has the same sign as the standard ATE, if exists. 
  \end{example}
  \begin{example}\label{ex:id_binary}
    If $Z_1 = \mu_1 + \nu_1$ and $Z_0 = \mu_0 + \nu_0$ for some identically distributed $\nu_1$ and $\nu_0$ with arbitrary distributions. Then $\nu_1 - \nu_0$ is symmetrically distributed. By \eqref{eq:AUC_example1_prob}, the rank-ATE and ATE, if exists, have the same signs. 
  \end{example}
  \begin{example}\label{ex:stochastic_monotonicity_binary}
    If $Z_1$ (first-order) stochastically dominates $Z_0$, then there exists a random variable $Z'_0$ that is equally distributed as $Z_0$ and $Z'_0\le Z_1$ almost surely. Thus,
    \[\rate(F_1, F_0) \ge \P(Z'_0 \ge Z_0) - \frac{1}{2} = 0.\]
    Thus, the rank-ATE also has the same sign as the ATE, if exists. 
  \end{example}

  The rank-ATE has the following nice properties.
  \begin{itemize}
  \item (Linearity) for any distributions $F_1, F_1', F_0, F_0'$    and $\alpha_1, \alpha_0\in (0, 1)$,
    \[\rate(\alpha_1 F_1 + (1 - \alpha_1)F_1', F_0) = \alpha_1\rate(F_1, F_0) + (1 - \alpha_1)\rate(F_1', F_0),\]
    and
    \[\rate(F_1, \alpha_0 F_0 + (1 - \alpha_0)F_0') = \alpha_0\rate(F_1, F_0) + (1 - \alpha_0)\rate(F_1, F_0').\]
  \item (Anti-symmetry) when $F_1$ and $F_0$ are both absolutely continuous,
    \[\rate(F_0, F_1) = -\rate(F_1, F_0).\]
    In particular, $\rate(F, F) = 0$ for any continuous distribution $F$.
  \item (Invariance) for any strictly increasing function $f$,
    \[\rate(F_0, F_1) = \rate(F_0\circ f^{-1}, F_1\circ f^{-1}).\]
  \item (Partial additivity) for any $\zeta\in [0, 1]$,
    \[\rate(F_1, \zeta F_1 + (1 - \zeta)F_0) - \rate(F_0, \zeta F_1 + (1 - \zeta)F_0) = \rate(F_1, F_0).\]
  \end{itemize}
  We call the last property partial additivity because it only holds when the second argument of the first two terms is in the class $\{\zeta F_1 + (1 - \zeta)F_0: \zeta \in [0, 1]\}$. Note that the ATE satisfies all but invariance, which is a desirable property especially when the variables do not have a natural unit or it is unclear what transformation is most scientifically meaningful \citep[e.g.][]{roth2023parallel}. The first three properties can be proved directly by definition and the partial additivity is proved in Appendix \ref{app:proofs}. 

 The rank-ATE has been studied in different contexts. In particular, it is the effective estimand of the Mann-Whitney test, also known as the Wilcoxon rank-sum test \citep[e.g.][]{vermeulen2015increasing}. It was also shown to be equivalent to the Area Under the Receiver Operating Characteristic curve (AUROC, also known as AUC), one of the most important measures in evaluating classification algorithms \citep{hanley1982meaning}. 

\section{Rank-OLS with a binary treatment}\label{sec:rank_OLS_binary}
In this section, we study the case where $W_i$ is binary. Let $(n_1, n_0)$ denote the size of the treated and control groups, respectively, $\pi = \P(W_i = 1)$ the marginal treatment intensity, and $\pi(z) = \P(W_i = 1\mid X_i = z)$ the propensity score. Furthermore, we let $(Y_i(1), Y_i(0))$ denote the pair of potential outcomes for unit $i$. Throughout the section we assume the stable unit treatment value assumption (SUTVA, \cite{rubin1974estimating}):
\begin{assumption}\label{as:SUTVA}
  For all $i = 1,\ldots, n$,
  \[Y_i = Y_i(1)W_i + Y_i(0)(1 - W_i).\]
\end{assumption}
Let $F_{Y(1)}$ and $F_{Y(0)}$ denote the marginal distributions of $Y(1)$ and $Y(0)$, respectively. Clearly,
\begin{equation}
  \label{eq:mixture}
  F_Y = \pi F_{Y(1)} + (1 - \pi)F_{Y(0)}.
\end{equation}
We further assume that $Y(1)$ and $Y(0)$ are continuous variables:
\begin{assumption}\label{as:continuous_Y}
  $F_{Y(1)}$ and $F_{Y(0)}$ are both absolutely continuous with respect to the Lebesgue measure on $\R$. 
\end{assumption}
Even if $Y(1)$ and $Y(0)$ are discrete, one can perturb $Y(1)$ and $Y(0)$ by a random noise uniformly distributed  on $[-\eps, \eps]$ for some sufficiently small $\eps$. This is equivalent to random tie-breaking in practice considering that all variables are discretized by rounding. 

\subsection{Rank-OLS without covariates under random assignments}\label{eq:rank_OLS}
As a warm-up, we consider the rank-OLS estimator without covariates, defined as follows:
\begin{equation}\label{eq:OLS_nocov}
    \hat{\beta}_{\nox} = \argmin \frac{1}{n}\sum_{i=1}^{n}\lb\frac{\RY_i}{n} - \beta_0 - \beta W_i\rb^2.
  \end{equation}
Above, the ranks are normalized by $n$ for ease of interpretation, as evidenced later, and the subscript indicates that no covariate is used. We study the property of $\hat{\beta}_{\nox}$ when assignments are completely random as in randomized experiments or quasi-experiments:
  \begin{assumption}\label{as:randomized}
    For all $i = 1,\ldots, n$,
    \[\P(W_i = 1\mid Y_i(1), Y_i(0), X_i) = \pi\in (0, 1).\]
  \end{assumption}
  Then we can prove the following result that the rank-OLS identifies the rank-ATE. 
  \begin{theorem}\label{thm:binary_noz_randomized}
    Under Assumptions \ref{as:SUTVA}, \ref{as:continuous_Y}, and \ref{as:randomized}, as $n\rightarrow\infty$,
    \begin{equation}\label{eq:beta_star}
      \hat{\beta}_{\nox} \stackrel{p}{\rightarrow}\rate(F_{Y(1)}, F_{Y(0)}).
      \end{equation}
    \end{theorem}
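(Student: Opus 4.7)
The plan is to invoke the DKW-type approximation in \eqref{eq:rank_approx} to replace the sample ranks by population CDF values, and then reduce the problem to a law-of-large-numbers computation that uses the mixture decomposition \eqref{eq:mixture} and the alternative expression \eqref{eq:rate_alternative} for the rank-ATE.

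First, since $W_i \in \{0, 1\}$, the closed-form OLS solution gives
\[
\hat{\beta}_{\nox} = \frac{1}{n_1}\sum_{i: W_i = 1}\frac{\RY_i}{n} - \frac{1}{n_0}\sum_{i: W_i = 0}\frac{\RY_i}{n}.
\]
By \eqref{eq:rank_approx}, $\max_i |\RY_i/n - F_Y(Y_i)| = O_\P(n^{-1/2})$. Combined with $n_1/n \stackrel{p}{\rightarrow} \pi$ and $n_0/n \stackrel{p}{\rightarrow} 1 - \pi$ (both consequences of Assumption \ref{as:randomized}), this lets me substitute $F_Y(Y_i)$ for $\RY_i/n$ inside both sample averages at the cost of a $o_\P(1)$ remainder.

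Next, for $w \in \{0,1\}$, the indices $\{i : W_i = w\}$ index an i.i.d.\ sample of $Y_i$ drawn from $F_{Y\mid w} = F_{Y(w)}$ (using SUTVA and Assumption \ref{as:randomized}), so the conditional law of large numbers yields
\[
\frac{1}{n_w}\sum_{i: W_i = w}F_Y(Y_i) \stackrel{p}{\rightarrow} \E_{Z_w\sim F_{Y(w)}}[F_Y(Z_w)].
\]
Now I would substitute $F_Y = \pi F_{Y(1)} + (1 - \pi) F_{Y(0)}$ from \eqref{eq:mixture} and evaluate each piece using \eqref{eq:rate_alternative}, together with the identity $\E_{Z_w \sim F_{Y(w)}}[F_{Y(w)}(Z_w)] = 1/2$ (which holds because $F_{Y(w)}(Z_w) \sim \mathrm{Uniform}[0,1]$ by Assumption \ref{as:continuous_Y}) and the anti-symmetry property of $\rate$. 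A direct calculation then gives
\[
\E[F_Y(Z_1)] = \tfrac{1}{2} + (1-\pi)\rate(F_{Y(1)}, F_{Y(0)}), \quad \E[F_Y(Z_0)] = \tfrac{1}{2} - \pi\,\rate(F_{Y(1)}, F_{Y(0)}),
\]
so that their difference is exactly $\rate(F_{Y(1)}, F_{Y(0)})$, completing the proof.

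There is no genuine conceptual obstacle here; the proof is essentially a bookkeeping exercise. The one place to take care is the handling of the random denominators $n_1$ and $n_0$ when passing from the DKW bound to the conditional LLN: one should condition on $(W_1, \ldots, W_n)$ and then integrate out, or simply use that $n_w/n$ stays bounded away from $0$ with probability tending to $1$. Everything else then follows by combining the three ingredients—DKW, LLN, and the mixture identity—with the algebraic properties of $\rate$ already established in the section.
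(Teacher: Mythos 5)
Your proposal is correct and follows essentially the same route as the paper: reduce to the difference in means, replace $\RY_i/n$ by $F_Y(Y_i)$ via \eqref{eq:rank_approx}, apply the law of large numbers to each treatment arm, and evaluate the limit through the mixture identity \eqref{eq:mixture}. The only cosmetic difference is that you invoke the anti-symmetry property of $\rate$ where the paper re-derives the same identity by integration by parts, so no further comparison is needed.
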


  An intriguing question is how the estimand changes with the reference population for ranking. In pricinple, one could rank each outcome among all units or among treated or control units. Specifically, define two other ranks as follows:
  \[\RY_{i,w} = \sum_{j:W_j=w}I(Y_j\le Y_i), \quad w = 0, 1,\]
  and consider the corresponding rank-OLS:
  \[\hat{\beta}_{\nox, w} = \argmin \frac{1}{n}\sum_{i=1}^{n}\lb\frac{\RY_{i,w}}{n_w} - \beta_0 - \beta W_i\rb^2, \quad w = 0, 1.\]
  Note that we use different normalizations because the range of $\RY_{i,1}$ and $\RY_{i,0}$ are different. Surprisingly, both $\hat{\beta}_{\nox, 1}$ and $\hat{\beta}_{\nox, 0}$ are equal to $\hat{\beta}_{\nox}$ up to a small factor.
  \begin{theorem}\label{thm:reference_robustness}
    Assuming no ties among $Y_i$s, 
    \[\hat{\beta}_{\nox, 1} = \hat{\beta}_{\nox} + \frac{1}{2n_1}, \quad \hat{\beta}_{\nox, 0} = \hat{\beta}_{\nox} - \frac{1}{2n_0}.\]
    As a result, if $n_1, n_0 \rightarrow \infty$ as $n\rightarrow \infty$,
    \[\hat{\beta}_{\nox, 1}\stackrel{p}{\rightarrow} \rate(F_{Y(1)}, F_{Y(0)}), \quad \hat{\beta}_{\nox, 0}\stackrel{p}{\rightarrow} \rate(F_{Y(1)}, F_{Y(0)}).\]
  \end{theorem}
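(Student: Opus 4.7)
The plan is to reduce everything to a difference-in-means identity and then book-keep pair counts between the treated and control groups. Since $W_i$ is binary, the closed-form OLS solution gives, for any outcome $U_i$ used on the left-hand side,
\[
\underset{\beta_0,\beta}{\argmin}\sum_{i}(U_i - \beta_0 - \beta W_i)^2 \;\Longrightarrow\; \hat\beta = \frac{1}{n_1}\sum_{i:W_i=1}U_i - \frac{1}{n_0}\sum_{i:W_i=0}U_i.
\]
So I would first rewrite $\hat\beta_{\nox}$, $\hat\beta_{\nox,1}$, and $\hat\beta_{\nox,0}$ as three such differences of group means, with $U_i$ taken to be $\RY_i/n$, $\RY_{i,1}/n_1$, and $\RY_{i,0}/n_0$ respectively. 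The problem is then reduced to evaluating each group sum of these ranks exactly.

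Next I would use the decomposition $\RY_i = \RY_{i,1} + \RY_{i,0}$ together with two elementary facts under the no-ties assumption. Within group $w$, the ranks $\{\RY_{i,w}: W_i = w\}$ are a permutation of $\{1,\ldots,n_w\}$, so $\sum_{i:W_i=w}\RY_{i,w} = n_w(n_w+1)/2$. For the cross-counts, define $A = \#\{(i,j): W_i=1,\ W_j=0,\ Y_j \le Y_i\}$. Then $\sum_{i:W_i=1}\RY_{i,0} = A$ directly, and by pairing up each mixed pair $(i,j)$ and invoking no ties, $\sum_{i:W_i=0}\RY_{i,1} = n_1 n_0 - A$.

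Plugging these four sums into the three difference-in-means expressions produces three closed forms in $n_1,n_0,A$. A short calculation yields
\[
\hat\beta_{\nox} = \frac{A}{n_1 n_0} - \frac{1}{2},\qquad \hat\beta_{\nox,1} = \frac{A}{n_1 n_0} - \frac{1}{2} + \frac{1}{2n_1},\qquad \hat\beta_{\nox,0} = \frac{A}{n_1 n_0} - \frac{1}{2} - \frac{1}{2n_0},
\]
which immediately gives the two identities in the statement. The consistency conclusions then follow by combining these finite-sample equalities with Theorem \ref{thm:binary_noz_randomized} and observing that $1/(2n_1), 1/(2n_0)\to 0$ in probability when $n_1,n_0\to\infty$.

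There is no genuine obstacle here; the only care needed is in the bookkeeping of the cross-sum $\sum_{i:W_i=0}\RY_{i,1}$, where one must correctly convert the no-ties assumption into the clean identity $A + \sum_{i:W_i=0}\RY_{i,1} = n_1 n_0$ by noting that every ordered pair with one treated and one control unit contributes to exactly one of the two counts. Once that identity is in hand, the algebraic cancellation of the $-1/2$ baseline and the $A/(n_1n_0)$ term is automatic, and the discrepancy $1/(2n_w)$ can be traced directly to the fact that the within-group ranks sum to $n_w(n_w+1)/2$ rather than $n_w^2/2$.
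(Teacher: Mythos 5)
Your proposal is correct and follows essentially the same route as the paper: reduce each estimator to a difference of group means, then use exact pair-counting under the no-ties assumption (your within-group permutation identity $\sum_{i:W_i=w}\RY_{i,w}=n_w(n_w+1)/2$ and cross-count identity $A+\sum_{i:W_i=0}\RY_{i,1}=n_1n_0$ are precisely the paper's double-sum computations via $I(Y_j\le Y_i)+I(Y_i\le Y_j)=1$), and finally invoke Theorem \ref{thm:binary_noz_randomized} for the limits. The only cosmetic difference is that you compute all three estimators explicitly in terms of the cross-count $A$, whereas the paper expresses $\hat{\beta}_{\nox}$ as the convex combination $\frac{n_1}{n}\hat{\beta}_{\nox,1}+\frac{n_0}{n}\hat{\beta}_{\nox,0}$; both yield the same identities.
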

  
  \subsection{Rank-OLS with covariates under random assignments}\label{subsec:OLS_binary_cov}
  In practice, covariates are often added into the regression for the purpose of adjusting for confounders or improving statistical efficiency. We consider two versions of rank-OLS, one without interactions and one with interactions between the treatment and covariates:
  \begin{equation}\label{eq:OLS}
    \hat{\beta} = \argmin \frac{1}{n}\sum_{i=1}^{n}\lb\frac{\RY_i}{n} - \beta_0 - \beta W_i - X_i^\top\eta\rb^2,
  \end{equation}
  and 
  \begin{equation}\label{eq:OLS_inter}
    \hat{\beta}_{\inter} = \argmin \frac{1}{n}\sum_{i=1}^{n}\lb\frac{\RY_i}{n} - \beta_0 - \beta W_i - X_i^\top\eta - W_i (X_i - \bar{X})^\top \td{\eta}\rb^2,
  \end{equation}
  where $\bar{X} = (1/n)\sum_{i=1}^{n}X_i$. The former estimator is widely used in applied work and the latter estimator is much less so. The latter estimator is equivalent to running separate regressions on the treated and control units with centered covariates $X_i - \bar{X}$ and contrasting the intercepts. Our motivation to consider $\hat{\beta}_{\inter}$ is from the regression adjustment literature \citep[e.g.][]{freedman2008regression, imbens2009recent,
  lin2013agnostic, li2020rerandomization, lei2021regression, negi2021revisiting}.

  The following result shows that the OLS estimand remains the same under the following mild regularity condition. 
  \begin{assumption}\label{as:second_moment}
    $0 < \lambda_{\min}(\E[XX']) \le \lambda_{\max}(\E[XX']) < \infty$ where $\lambda_{\min}$ and $\lambda_{\max}$ denote the minimal and maximal eigenvalues. 
  \end{assumption}
  
  \begin{theorem}\label{thm:binary_z_randomized}
    In the setting of Theorem \ref{thm:binary_noz_randomized} and assuming Assumption \ref{as:second_moment}, as $n\rightarrow\infty$, 
    \begin{equation*}
      \hat{\beta} \stackrel{p}{\rightarrow}\rate(F_{Y(1)}, F_{Y(0)}),\quad \text{and }\quad \hat{\beta}_{\inter} \stackrel{p}{\rightarrow}\rate(F_{Y(1)}, F_{Y(0)}).
    \end{equation*}
  \end{theorem}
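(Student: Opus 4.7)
The plan is to reduce to an oracle regression in which the normalized rank $\RY_i/n$ is replaced by $U_i := F_Y(Y_i)$. By the Dvoretzky--Kiefer--Wolfowitz bound recalled in \eqref{eq:rank_approx}, $\max_i|\RY_i/n - U_i| = O_\P(n^{-1/2})$. Since Assumption \ref{as:second_moment} together with $W \indep X$ (Assumption \ref{as:randomized}) makes the sample Gram matrices of the design vectors $(1, W_i, X_i)$ and $(1, W_i, X_i, W_i(X_i - \bar X))$ converge to invertible limits, each OLS coefficient is a continuous mapping of sample cross-moments of design and response. A uniform $O_\P(n^{-1/2})$ perturbation of the response therefore perturbs $\hat\beta$ and $\hat\beta_{\inter}$ by $o_\P(1)$, and I may work with $U_i$ throughout.

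For $\hat\beta$: because $W \indep X$, the population linear projection of $W$ on $(1, X)$ is simply the constant $\pi$, so the Frisch--Waugh--Lovell residualization of $W_i$ against $(1, X_i)$ coincides in the limit with $W_i - \pi$. Thus the population OLS coefficient on $W_i$ in the regression of $U_i$ on $(1, W_i, X_i)$ equals $\Cov(U, W)/\Var(W)$, which is exactly the limit of the no-covariate estimator analyzed in Theorem \ref{thm:binary_noz_randomized}. Hence $\hat\beta \stackrel{p}{\rightarrow} \rate(F_{Y(1)}, F_{Y(0)})$.

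For $\hat\beta_{\inter}$: the interacted specification is algebraically equivalent to running two separate regressions of $U_i$ on $(1, X_i - \bar X)$, one on the treated and one on the control units, with $\hat\beta_{\inter}$ equal to the difference of the two fitted intercepts. The intercept in group $W_i = w$ equals $\bar U_w - \hat\gamma_w^\top(\bar X_w - \bar X)$ with $\hat\gamma_w$ the within-group slope. Under random assignment $W \indep X$ gives $\bar X_w - \bar X = O_\P(n^{-1/2})$, while Assumption \ref{as:second_moment} keeps $\hat\gamma_w$ bounded in probability, so the correction vanishes. Since $\bar U_w \stackrel{p}{\rightarrow} \E[U \mid W = w] = \E[F_Y(Y(w))]$, we obtain $\hat\beta_{\inter} \stackrel{p}{\rightarrow} \E[F_Y(Y(1))] - \E[F_Y(Y(0))]$.

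It remains to evaluate this limit. Substituting the mixture \eqref{eq:mixture} and using that $F_{Y(w)}(Y(w))$ is $\mathrm{Uniform}[0,1]$ under Assumption \ref{as:continuous_Y},
\[
\E[F_Y(Y(1))] - \E[F_Y(Y(0))] = (1-\pi)\lb\E[F_{Y(0)}(Y(1))] - \tfrac{1}{2}\rb + \pi\lb \tfrac{1}{2} - \E[F_{Y(1)}(Y(0))]\rb,
\]
and by \eqref{eq:rate_alternative} together with the anti-symmetry property both brackets equal $\rate(F_{Y(1)}, F_{Y(0)})$, so the sum collapses to $\rate(F_{Y(1)}, F_{Y(0)})$. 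The main obstacle is the first step: one needs Assumption \ref{as:second_moment} and the $W \indep X$ orthogonality to control the Gram matrix of the expanded design (especially the interaction block of $\hat\beta_{\inter}$), so that the $O_\P(n^{-1/2})$ substitution of the oracle response translates into an $o_\P(1)$ perturbation of both OLS estimators and thereby legitimizes the passage from the empirical argmin to the population limit.
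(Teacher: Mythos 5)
Your proposal is correct and follows essentially the same route as the paper: reduce to the oracle response $F_Y(Y_i)$ via \eqref{eq:rank_approx}, then use the orthogonality of $W$ with $X$ (and with the interaction term) under Assumption \ref{as:randomized} to collapse both estimators to the no-covariate case of Theorem \ref{thm:binary_noz_randomized}. The only cosmetic differences are that the paper invokes the population Frisch--Waugh--Lovell theorem for $\hat{\beta}_{\inter}$ directly via $\Cov(W, W(X-\E[X]))=0$, whereas you argue through the equivalent two-group-regression characterization, and you re-derive the value $\E[F_Y(Y(1))]-\E[F_Y(Y(0))]=\rate(F_{Y(1)},F_{Y(0)})$ rather than citing it from the earlier proof.
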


\subsection{Rank-OLS with covariates under confounded assignments}\label{subsec:OLS_binary_confounded}

In this subsection, we relax the assumption of random assignments and assume the propensity score $\pi(z)$ varies with $z$. To make progress, we assume strong ignorability or selection on observables:
\begin{assumption}\label{as:strong_ignorability}
  $Y_i(1), Y_i(0)\indep W_i\mid X_i$.
\end{assumption}
In addition, we assume strict overlap or positivity \citep[e.g.][]{crump2009dealing,
d2021overlap, lei2021distribution}:
\begin{assumption}\label{as:overlap}
  There exist constants $c \in (0, 1)$ such that $\pi(z) \in [c, 1-c]$ for all $z$. 
\end{assumption}
We state a generic result that shows $\hat{\beta}$ converges to a weighted average of conditional rank-ATEs. The proof is presented in Appendix \ref{app:proofs}.
\begin{theorem}\label{thm:weighted_average}
  Let $\td{\pi}(X)$ be the population projection of $\pi(X)$ onto the linear space of $X$, i.e.,
  \[\td{\pi}(X) = X^\top\omega^{*}, \quad \text{where }\omega^{*} = \argmin_{\omega}\E[(\pi(X) - X^\top \omega)^2] = (\E[XX'])^{-1}\E[X\pi(X)].\]
  Assume one of the following assumptions hold:
  \begin{enumerate}[(a)]
  \item $\pi(X) = X^\top \omega^{*}$;
  \item $\E[F_Y(Y(0))\mid X] = X^\top \eta^{*}$ for some $\eta^{*}\in \R^{d}$.
  \end{enumerate}
  Then
  \[\hat{\beta} \stackrel{p}{\rightarrow}\E\left[\frac{w(X)}{\E[w(X)]}\rate(F_{Y(1)\mid X}, F_{Y(0)\mid X})\right], \quad \text{where }w(X) = \pi(X)(1 - \td{\pi}(X))\]
\end{theorem}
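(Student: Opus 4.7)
The plan is to combine three standard ingredients: the Dvoretzky--Kiefer--Wolfowitz approximation \eqref{eq:rank_approx}, the Frisch--Waugh--Lovell (FWL) theorem, and strong ignorability (Assumption~\ref{as:strong_ignorability}). By \eqref{eq:rank_approx}, replacing $\RY_i/n$ by $F_Y(Y_i)$ in the OLS objective perturbs the fitted coefficient by $O_\P(n^{-1/2})$; Assumption~\ref{as:second_moment} ensures the Gram matrix of $(1,W,X)$ is asymptotically invertible, and Assumption~\ref{as:overlap} forces $\E[(W-\td\pi(X))^2]$ to be bounded below. So I may work directly with the population linear projection of $F_Y(Y)$ onto $(1,W,X)$.

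By FWL, this population coefficient equals
\[
\beta^{*} \;=\; \frac{\E[(W-\td\pi(X))\,F_Y(Y)]}{\E[(W-\td\pi(X))^2]},
\]
where $\td\pi(X)$ absorbs the intercept (taking the constant to lie in $X$). The denominator reduces using the normal equations $\E[X(W-\td\pi(X))]=0$, which imply $\E[\td\pi(X)\pi(X)]=\E[\td\pi(X)^2]$, so $\E[(W-\td\pi(X))^2]=\E[W]-\E[\td\pi(X)^2]=\E[\pi(X)(1-\td\pi(X))]=\E[w(X)]$. For the numerator I would use SUTVA to write $F_Y(Y)=WF_Y(Y(1))+(1-W)F_Y(Y(0))$, invoke the binary-$W$ identities $(W-\td\pi(X))W=W(1-\td\pi(X))$ and $(W-\td\pi(X))(1-W)=-(1-W)\td\pi(X)$, and take conditional expectations given $X$ under Assumption~\ref{as:strong_ignorability}. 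This produces
\[
\E[(W-\td\pi(X))\,F_Y(Y)] \;=\; \E[w(X)\mu_1(X)]-\E[\td\pi(X)(1-\pi(X))\mu_0(X)],
\]
where $\mu_w(X):=\E[F_Y(Y(w))\mid X]$. Rewriting this as $\E[w(X)(\mu_1(X)-\mu_0(X))]+\E[(\pi(X)-\td\pi(X))\mu_0(X)]$, the residual term vanishes under case (a) because $\pi\equiv\td\pi$, and under case (b) because $\mu_0(X)=X^\top\eta^{*}$ then lies in $\mathrm{span}(X)$, which is orthogonal to $\pi-\td\pi$ by the definition of the linear projection.

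The main obstacle I anticipate is the last step: reconciling the weighted contrast $\mu_1(X)-\mu_0(X)=\int F_Y\,d(F_{Y(1)\mid X}-F_{Y(0)\mid X})$ with the conditional rank-ATE $\rate(F_{Y(1)\mid X},F_{Y(0)\mid X})$. Because $F_Y$ is itself the marginal mixture $\E_{X'}[\pi(X')F_{Y(1)\mid X'}+(1-\pi(X'))F_{Y(0)\mid X'}]$, the integrand carries cross-$X$ terms that must collapse after weighting by $w(X)$ and averaging over $X$. I expect the linearity, anti-symmetry, and partial additivity properties of $\rate$ stated just after its definition---together with the linear structure imposed by (a) or (b)---to be the key levers, and the bulk of the proof effort to reside in this bookkeeping.
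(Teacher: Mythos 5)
Your route is the same as the paper's: Frisch--Waugh--Lovell plus the DKW approximation to reduce $\hat\beta$ to $\E[(W-\td\pi(X))F_Y(Y)]/\E[(W-\td\pi(X))^2]$, the normal-equation identity $\E[\td\pi(X)(\pi(X)-\td\pi(X))]=0$ to get the denominator $\E[w(X)]$, and strong ignorability to reduce the numerator to $\E[w(X)(\mu_1(X)-\mu_0(X))]$ plus a residual $\E[(\pi(X)-\td\pi(X))\mu_0(X)]$ that vanishes under (a) or (b). (The paper decomposes the numerator as $\E[(W-\td\pi)F_Y(Y(0))]+\E[(W-\td\pi)W(F_Y(Y(1))-F_Y(Y(0)))]$ rather than splitting on $W$ versus $1-W$, but the two algebraic paths are equivalent.) Up to this point your argument is complete and correct.

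The step you defer --- reconciling $\mu_1(X)-\mu_0(X)=\int F_Y\,d(F_{Y(1)\mid X}-F_{Y(0)\mid X})$ with $\rate(F_{Y(1)\mid X},F_{Y(0)\mid X})$ --- is not bookkeeping; it is the crux, and the tools you name do not close it. Partial additivity gives $\rate(F_1,G)-\rate(F_0,G)=\rate(F_1,F_0)$ only when the reference $G$ lies in $\{\zeta F_1+(1-\zeta)F_0\}$, whereas here $G=F_Y$ mixes the conditional laws across \emph{all} values of $X$, so the cross-$X$ terms do not cancel, even after weighting by $w(X)$ and averaging. Concretely, take $X\in\{1,2\}$ equiprobable, $\pi\equiv 1/2$ (so (a) holds and $w(X)$ is constant), $F_{Y(1)\mid 1}=\mathrm{Unif}[0,1]$, $F_{Y(0)\mid 1}=\mathrm{Unif}[0.5,1.5]$, $F_{Y(1)\mid 2}=F_{Y(0)\mid 2}=\mathrm{Unif}[10,11]$: then $\E[w(X)(\mu_1(X)-\mu_0(X))]/\E[w(X)]=\rate(F_{Y(1)},F_{Y(0)})=-3/32$, while $\E[\rate(F_{Y(1)\mid X},F_{Y(0)\mid X})]=-3/16$. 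So the identity you hope for is false, and the displayed limit cannot be reached from the (correct) expression $\E[w(X)(\mu_1(X)-\mu_0(X))]/\E[w(X)]$ you derived. You should be aware that the paper's own proof handles this step by asserting $\E[F_Y(Y(1))-F_Y(Y(0))\mid X]=\rate(F_{Y(1)\mid X},F_{Y(0)\mid X})$ in a single line with no justification, so your proposal is no less complete than the paper here; but as written, both arguments stop at a step that is genuinely invalid, and the honest conclusion of your computation is that the limit is $\E\bigl[\tfrac{w(X)}{\E[w(X)]}\bigl(\rate(F_{Y(1)\mid X},F_Y)-\rate(F_{Y(0)\mid X},F_Y)\bigr)\bigr]$ rather than the weighted average of conditional rank-ATEs.
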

The condition (a) and (b) are similar to the ones in \cite{borusyak2024negative}. In particular, the condition (b) is the analogue of the classical assumption of saturated specification \citep{angrist1995estimating}. When $\td{\pi}(X)\in [0, 1]$, the effective estimand of $\hat{\beta}$ is a convex average of conditional rank-ATEs.

\section{Rank-OLS with a general treatment}
In this section, we consider a general treatment $W_i \in \W \subset\R$, where $\W$ can be a finite or infinite set. Define the rank of $W_i$ as 
\[\RW_i = \frac{1}{n}\sum_{j=1}^{n}I(W_j\le W_i).\]
Let $F_W$ denote the marginal distribution of $W$, $Y(w)$ the potential outcome for dose $w$, and $F_{Y(w)}$ the marginal distribution of $Y(w)$. We make the following assumptions as analogues of Assumptions \ref{as:SUTVA} and \ref{as:continuous_Y}.
\begin{assumption}\label{as:SUTVA_cont}
  For all $i = 1,\ldots, n$, $Y_i = Y_i(W_i)$. 
\end{assumption}
\begin{assumption}\label{as:continuous_Y_cont}
  $F_{Y(w)}$ is absolutely continuous for any $w\in \R$. 
\end{assumption}

\subsection{Rank-OLS under random assignments}
We consider the rank-OLS estimator with a transformed treatment:
\begin{equation}\label{eq:OLS_general}
  \hat{\beta}_{h_n} = \argmin \frac{1}{n}\sum_{i=1}^{n}\lb\frac{\RY_i}{n} - \gamma_0 - \gamma h_n\lb W_i\rb - X_i^\top\eta\rb^2,
\end{equation}
where $h_n$ is any function that potentially depends on data (and hence is potentially random). This nests a broad class of regressions in applied work. For example, when $h_n$ is the identity mapping, $\hat{\beta}_{h_n}$ recovers $\hat{\beta}$ studied in Section \ref{subsec:OLS_binary_cov}. Another widely-used class of regressions that is nested in \eqref{eq:OLS_general} is the rank-rank regressions:
\begin{equation}\label{eq:rank-rank}
  \hat{\gamma}_{g_n} = \argmin \frac{1}{n}\sum_{i=1}^{n}\lb\frac{\RY_i}{n} - \gamma_0 - \gamma g_n\lb \frac{\RW_i}{n}\rb - X_i^\top\eta\rb^2.
\end{equation}
In particular, $\hat{\gamma}_{g_n}$ is identical to $\hat{\beta}_{h_n}$ with $h_n = g_n\circ \hat{F}_{W}$ where $\hat{F}_W$ is the empirical CDF of $W_i$s. When $g_n$ is the identity mapping, \eqref{eq:rank-rank} is the standard rank-rank regression \citep[e.g.][]{chetverikov2023inference}. It also nests regressions that dichotomize $W_i$ by choosing $g_n(r) = I(r > 1/2)$ or those that coarsen $W_i$ into quartiles/quintiles/deciles by choosing a stepwise function.

It is known that, when $Y_i$ is continuous, $g_n(r) = r$, and no covariate is included, $\hat{\gamma}_{\nox}$ is equivalent to the Spearman's rank correlation \citep[e.g.][]{chetverikov2023inference}. The estimand can be equivalently expressed as $\Cor(F_W(W), F_Y(Y))$ where $\Cor$ denotes the Pearson correlation coefficient. However, the causal interpretation of this estimand is underexplored. We first derive an expression of the estimand in terms of potential outcomes when $W_i$ is randomly assigned:
\begin{assumption}\label{as:randomized_cont}
  For all $i = 1,\ldots, n$,
  \[W_i \mid (\{Y_i(w): w\in \W\}, X_i) \sim F_W.\]
\end{assumption}

To study the limit of $\hat{\beta}_{h_n}$, we impose the following assumption on limited dependence of $h_n$ on data. It is easy to check that the aforementioned examples all satisfy the assumption.
  \begin{assumption}\label{as:limit_h}
    There exists a non-stochastic function $h: [0, 1]\mapsto \R$ such that, as $n\rightarrow \infty$,
    \[\sum_{i=1}^{n}\lb h_n(W_i) - h(W_i)\rb^2  = O_\P(1).\]
  \end{assumption}
  For rank-rank OLS estimators, by \eqref{eq:rank_approx}, Assumption \ref{as:limit_h} so long as
  \[\sum_{i=1}^{n}\lb g_n(W_i) - g(W_i)\rb^2  = O_\P(1),\]
  for some non-stochastic function $g$.
  
  \begin{theorem}\label{thm:general_noz_randomized}
    Under Assumptions \ref{as:SUTVA_cont}, \ref{as:continuous_Y_cont}, \ref{as:randomized_cont}, and \ref{as:limit_h}, the rank-OLS estimator defined in \eqref{eq:OLS_general} has the following limit as $n\rightarrow\infty$:
       \begin{equation}\label{eq:gamma_star_2}
      \hat{\beta}_{h_n} \stackrel{p}{\rightarrow} \beta_h^{*}\triangleq \E\left[\frac{b(W, \td{W})}{\E[b^2(W, \td{W})]}\rate\lb F_{Y(W)}, F_{Y(\td{W})}\rb\right],
    \end{equation}
    where $(W, \td{W})$ are two independent draws from $F_W$ and
    \begin{equation}
      \label{eq:b_weight}
      b(W, \td{W}) = (h(W) - h(\td{W})) I(W > \td{W}).
    \end{equation}
  \end{theorem}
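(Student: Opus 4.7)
The plan is to reduce $\hat{\beta}_{h_n}$ to a ratio of covariance-like sample moments via Frisch--Waugh--Lovell, to replace the stochastic objects $R^Y_i/n$ and $h_n$ by their deterministic limits, to apply a law of large numbers, and finally to recast the resulting population ratio as $\beta_h^{*}$ through a symmetrization argument.

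First I partial out $(1,X_i)$ from both $R^Y_i/n$ and $h_n(W_i)$, so that by Frisch--Waugh $\hat{\beta}_{h_n}=(\sum_i \td{h}_{n,i}(R^Y_i/n))/(\sum_i \td{h}_{n,i}^{2})$, where $\td{h}_{n,i}$ is the sample OLS residual of $h_n(W_i)$ on $(1,X_i)$. Under Assumption \ref{as:randomized_cont}, $W\indep X$, so $\Cov(h(W),X)=0$ and the population projection of $h(W)$ on $(1,X)$ reduces to the constant $\E[h(W)]$. Combined with Assumption \ref{as:second_moment} (which ensures the sample Gram matrix is eventually invertible), the sample projection coefficients on $X$ converge to $0$ in probability, and $\td{h}_{n,i}$ is effectively $h(W_i)-\E[h(W)]$ for the purposes of sample averages. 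Two substitution errors remain to be controlled: Assumption \ref{as:limit_h} gives $\sum_i(h_n(W_i)-h(W_i))^{2}=O_{\P}(1)$, so Cauchy--Schwarz implies that replacing $h_n$ by $h$ in the normalized numerator and denominator contributes $O_{\P}(n^{-1/2})$ error; the DKW-based bound \eqref{eq:rank_approx} gives $\max_i|R^Y_i/n-F_Y(Y_i)|=O_{\P}(n^{-1/2})$, so substituting $R^Y_i/n$ by $F_Y(Y_i)$ is also negligible. A direct LLN then yields $\hat{\beta}_{h_n}\stackrel{p}{\rightarrow}\Cov(h(W),F_Y(Y))/\Var(h(W))$.

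Finally I verify the identity $\Cov(h(W),F_Y(Y))/\Var(h(W))=\beta_h^{*}$. Let $(\td{W},\td{Y})$ be an i.i.d.\ copy of $(W,Y)$. Standard symmetrization gives $\Var(h(W))=\tfrac{1}{2}\E[(h(W)-h(\td{W}))^{2}]$; splitting by the sign of $W-\td{W}$ and noting that the integrand vanishes on $\{W=\td{W}\}$ yields $\Var(h(W))=\E[b^{2}(W,\td{W})]$. For the covariance, introduce an independent copy $(W^{*},Y^{*})$ of $(W,Y)$ and use $F_Y(Y)=\P(Y^{*}\le Y\mid Y)$, so that $\E[h(W)F_Y(Y)]=\E[h(W)\one\{Y\ge Y^{*}\}]$. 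Conditioning on $(W,W^{*})$, the conditional probability equals $\rate(F_{Y(W)},F_{Y(W^{*})})+1/2$ by definition of rank-ATE. Subtracting $\E[h(W)]/2$ and then symmetrizing across $(W,W^{*})$ via the anti-symmetry of $\rate$ produces $\Cov(h(W),F_Y(Y))=\E[b(W,\td{W})\rate(F_{Y(W)},F_{Y(\td{W})})]$, and dividing by $\E[b^{2}(W,\td{W})]$ delivers $\beta_h^{*}$.

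The main obstacle is this final algebraic identity, which requires careful bookkeeping across up to three i.i.d.\ copies of $(W,Y)$ and deploying the anti-symmetry of $\rate$ to manufacture the one-sided indicator $I(W>\td{W})$ in $b$. The Frisch--Waugh reduction and the two substitution steps are otherwise standard, with Assumption \ref{as:limit_h} providing just enough control for the replacement of $h_n$ by $h$ to vanish after the $1/n$ normalization.
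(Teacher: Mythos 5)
Your proposal is correct and follows essentially the same route as the paper: Frisch--Waugh--Lovell plus the DKW bound and Assumption \ref{as:limit_h} to reduce $\hat{\beta}_{h_n}$ to $\Cov(h(W),F_Y(Y))/\Var(h(W))$, the symmetrization identity $\Var(h(W))=\E[b^2(W,\td{W})]$ for the denominator, and anti-symmetry of $\rate$ to fold the two-sided expectation into the one-sided weight $b(W,\td{W})$. The only cosmetic difference is in the covariance step, where you condition on an independent copy $(W^{*},Y^{*})$ to produce $\rate(F_{Y(W)},F_{Y(W^{*})})+1/2$ directly, while the paper first writes $\E[F_Y(Y(w))-1/2]=\rate(F_{Y(w)},F_Y)$ and then expands $F_Y$ by linearity of $\rate$ in its second argument --- these are the same computation.
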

  The proof is presented in Appendix \ref{app:proofs}. Clearly, when $h$ is monotonely increasing, $b(W, \td{W})\ge 0$. In this case, Theorem \ref{thm:general_noz_randomized} implies that the effective estimand is a weighted sum of $\rate\lb F_{Y(w)}, F_{Y(\td{w})}\rb$ for all dose pairs $(w, \td{w})$ with $w\ge \td{w}$ with non-negative weights. 
  
  To further illustrate the estimand, we consider the following examples as extensions of Example \ref{ex:symmetric_binary} to \ref{ex:stochastic_monotonicity_binary}.
    \begin{example}\label{ex:symmetric_general}
    If $Y(w) = \mu(w) + \nu(w)$ where $\nu(w)$ is symmetrically distributed for each $w$, then $\beta_h^{*} \ge 0$ if $\mu(w)$ is weakly increasing in $w$. For example, the conditions are satisfied when $Y(w)$ is a Gaussian process with weakly increasing mean function. 
  \end{example}
  \begin{example}\label{ex:id_general}
    If $Y(w) = \mu(w) + \nu(w)$ where $\nu(w)$ and $\nu(\td{w})$ are identically distributed for each pair $(w, \td{w})$. Then $\beta_h^{*} \ge 0$ if $\mu(w)$ is weakly increasing in $w$ whenever $\mu(w)$ is weakly increasing.
  \end{example}
  \begin{example}\label{ex:stochastic_monotonicity_general}
    If $Y(w)$ is weakly stochastically increasing in $w$, then $\beta_h^{*}\ge 0$.
  \end{example}
  We further provide examples with different $h_n$. 
  \begin{example}\label{ex:estimand_binary_w}
    If $W_i$ is binary and $h_n(w) = w$, $\hat{\beta}_{h_n} = \hat{\beta}_{\nox}$. In this case, $F_W(W) = W + (1 - \pi)(1 - W)$ and thus,
    \[b(W, \td{W}) = I(W = 1, \td{W} = 0) = W(1 - \td{W}).\]
    Then
    \[\beta_{h}^{*} = \frac{\E[W(1 - \td{W})\rate(F_{Y(1)}, F_{Y(0)})]}{\E[W(1 - \td{W})]} = \rate(F_{Y(1)}, F_{Y(0)}).\]
  \end{example}
  \begin{example}\label{ex:estimand_dichotomization}
    Suppose $W_i$ is continuous and $g_n(r) = I(r > 1/2)$ in the rank-rank OLS. Let $m_{W}$ be the median of $W_i$. Then
    \[b(W, \td{W}) = I(W > m_{W} > \td{W}).\]
    Under Assumption \ref{as:randomized_cont}, $F_{Y\mid W} = F_{Y(W)\mid W}$. Thus,
    \begin{align*}
      &\beta_{h}^{*} = 4\E[\rate(F_{Y(W)}, F_{Y(\td{W})})I(W > m_{W})I(\td{W} < m_{W})]\\
      & = 4\E[\rate(F_{Y\mid W}, F_{Y\mid \td{W}})I(W > m_{W})I(\td{W} < m_{W})]\\
      & = \rate(F_{Y\mid W > m_{W}}, F_{Y\mid W < m_{W}}),
    \end{align*}
    where the last line invokes linearity of $\rate(\cdot, \cdot)$.
  \end{example}

  \subsubsection{Normalizing $h_n$ for comparability across different treatment variables}
  To enable comparison of the OLS estimates across different treatment variables, we need to enforce the estimands to have the same scale. It suffices to ensure that $\beta^{*}_{h}$ is a convex average of pairwise effects $\{\rate\lb F_{Y(w)}, F_{Y(\td{w})}\rb: w > \td{w}\}$, in which case $\beta^{*}_{h}\in [-1/2, 1/2]$. To achieve this, we can rescale $h_n$ such that
  \begin{equation}\label{eq:convex_average}
    \E[b(W, \td{W})] = \E[b^2(W, \td{W})].
  \end{equation}
  For any given $h_n$, we can replace it by
  \[\frac{\E[(h_n(W) - h_n(\td{W}))I(W > \td{W})]}{\E[(h_n(W) - h_n(\td{W}))^2I(W > \td{W})]}\cdot h_n.\]
  Both the numerator and denominator can be estimated by U-statistics:
  \begin{equation}\label{eq:transform_hn}
    \td{h}_{n}(w) = \hat{\kappa} h_n(w), \quad \text{where }\hat{\kappa} = \frac{\sum_{i\neq j}(h_n(W_i) - h_n(W_j))I(W_i > W_j)}{\sum_{i\neq j}(h_n(W_i) - h_n(W_j))^2I(W_i > W_j)}.
  \end{equation}
  For example, when $h_n(w) = w$ as in the case of binary treatment, $\hat{\kappa} = 1$. When $h_n(w) = \hat{F}_W(w)$,
  \begin{align*}
    \hat{\kappa} &= \frac{(1/n)\sum_{i>j}(i - j)}{(1/n^2)\sum_{i>j}(i - j)^2} = \frac{(1/n)\sum_{i} i(i-1)/2}{(1/n^2)\sum_{i}(i-1)i(2i-1)/6} = \frac{(n+1)(n-1)/6}{(n+1)(n-1)/12} = 2.
  \end{align*}
  Thus, we need to replace $\RW_i / n$ by $2\RW_i / n$.
  
  \section{Other econometric methods}\label{sec:other_reg}
  To ease exposition and highlight the main point, we limit our discussion to binary treatments and exclude the consideration of covariates. All technical proofs are presented in Appendix \ref{app:proofs}.
  
  \subsection{Rank-2SLS with a binary IV}
  Let $Z_i$ be a valid binary instrumental variable, $(W_i(1), W_i(0))$ be the potential treatment assignments, and $\{Y_i(w, z): w, z\in \{0, 1\}\}$ be the potential outcomes. We make the standard IV assumptions \citep{angrist1996identification}.
  \begin{assumption}\label{as:IV}
    For all $i = 1,\ldots,n$,
    \begin{enumerate}[(a)]
    \item (exclusion restriction) $Y_i(w, 1) = Y_i(w, 0) \triangleq Y_i(w)$ for $w = 0,1$;
    \item (ignorability) $Z_i\indep (W_i(1), W_i(0), Y_i(1), Y_i(0))$;
    \item (monotonicity) $W_i(1) \ge W_i(0)$ almost surely;
    \item (relevance) $\P(W=1\mid Z=1) > \P(W=1\mid Z=0)$.
    \end{enumerate}
  \end{assumption}

 Further, we let $G_i$ denote the type of the unit $i$ (i.e., always-taker, never-taker, and complier) with
  \[G_i = \left\{
      \begin{array}{cc}
        \ta & W_i(1) = W_i(0) = 1\\
        \tn & W_i(1) = W_i(0) = 0\\
        \tc & W_i(1) = 1, W_i(0) = 0.
      \end{array}
    \right..\]
  Throughout this secvtion we assume that $(Z_i, W_i(1), W_i(0), Y_i(1), Y_i(0), G_i)$ are i.i.d. and denote by $(Z, W(1), W(0), Y(1), Y(0), G)$ a generic draw. For each $g\in \{\ta, \tn, \tc\}$ and $w\in\{0, 1\}$, let $F_{Y(w)\mid g}$ denote the conditional distribution of $Y(w)$ given $G = g$.

  Let $\hat{\beta}_{\sls}$, $\hat{\beta}_{\sls, 1}$, $\hat{\beta}_{\sls, 0}$ denote the 2SLS estimators applied to $\RY_i / n$, $\RY_{i,1}/n_1$, and $\RY_{i,0}/n_0$, respectively. First, we derive the effective estimand of 2SLS applied to different types of ranks.
  \begin{theorem}\label{thm:2sls}
    Under Assumptions \ref{as:SUTVA}, \ref{as:continuous_Y} and \ref{as:IV}, as $n\rightarrow \infty$,
    \[\hat{\beta}_{\sls}\stackrel{p}{\rightarrow} \rate(F_{Y(1)\mid \tc}, F_{Y}) - \rate(F_{Y(0)\mid \tc}, F_{Y}),\]
    \[\hat{\beta}_{\sls,1}\stackrel{p}{\rightarrow} \rate(F_{Y(1)\mid \tc}, F_{Y\mid W=1}) - \rate(F_{Y(0)\mid \tc}, F_{Y\mid W=1}),\]
    \[\hat{\beta}_{\sls,0}\stackrel{p}{\rightarrow} \rate(F_{Y(1)\mid \tc}, F_{Y\mid W=0}) - \rate(F_{Y(0)\mid \tc}, F_{Y\mid W=0}),\]
Without further assumptions, the above estimands are mutually different.
  \end{theorem}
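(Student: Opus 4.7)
The plan is to reduce 2SLS to its standard Wald form and then apply the DKW inequality \eqref{eq:rank_approx} to replace the rank-based regressands by population CDF transformations. For any regressand $T_i$, the 2SLS estimator of $T$ on $W$ instrumented by $Z$ equals
\[
\frac{\bar{T}_{Z=1}-\bar{T}_{Z=0}}{\bar{W}_{Z=1}-\bar{W}_{Z=0}},
\]
where $\bar{T}_{Z=z}$ is the within-instrument-group sample mean. By Assumption \ref{as:IV}, the denominator converges in probability to $\P(W=1\mid Z=1)-\P(W=1\mid Z=0)=\P(G=\tc)>0$ under monotonicity and relevance, so the task reduces to analyzing the three numerators.

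For $\hat{\beta}_{\sls}$, \eqref{eq:rank_approx} directly gives $\RY_i/n = F_Y(Y_i)+O_\P(n^{-1/2})$ uniformly in $i$. For $\hat{\beta}_{\sls,w}$, I would first condition on the treatment vector: then $\{Y_i:W_i=w\}$ is i.i.d.\ from $F_{Y\mid W=w}$ with sample size $n_w=n\pi_w(1+o_\P(1))$, so a conditional DKW bound yields $\RY_{i,w}/n_w=F_{Y\mid W=w}(Y_i)+O_\P(n^{-1/2})$ uniformly on that subsample. Combining with a law-of-large-numbers argument across the two instrument groups, the numerator in each of the three cases converges in probability to $\E[F(Y)\mid Z=1]-\E[F(Y)\mid Z=0]$ for the appropriate reference CDF $F\in\{F_Y,F_{Y\mid W=1},F_{Y\mid W=0}\}$.

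Next, I would decompose by compliance type. By Assumption \ref{as:IV} one has $Z\indep(G,Y(1),Y(0))$; combined with monotonicity this yields $Y=Y(1)$ for always-takers, $Y=Y(0)$ for never-takers, and $Y=Y(z)$ for compliers. Hence
\[
\E[F(Y)\mid Z=1]-\E[F(Y)\mid Z=0]=\P(G=\tc)\bigl(\E[F(Y(1))\mid \tc]-\E[F(Y(0))\mid \tc]\bigr),
\]
because the always-taker and never-taker contributions are identical under both values of $Z$. Dividing by $\P(G=\tc)$ from the Wald denominator and rewriting each expectation via \eqref{eq:rate_alternative} produces $\rate(F_{Y(1)\mid \tc},F)-\rate(F_{Y(0)\mid \tc},F)$, which gives the three claimed limits upon substituting $F\in\{F_Y,F_{Y\mid W=1},F_{Y\mid W=0}\}$.

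For the non-collapsing claim I would exhibit a configuration with always-takers, e.g., $\P(Z=1)=1/2$ and $\P(G=\ta)=\P(G=\tc)=1/2$, with $Y(1)\mid \ta\sim N(3,\eps^2)$ for small $\eps$ and $Y(w)\mid \tc\sim N(w,1)$. The three reference CDFs are then genuinely distinct: $F_{Y\mid W=1}$ is a $2/3$--$1/3$ mixture concentrated near $3$, $F_{Y\mid W=0}$ is $N(0,1)$, and $F_Y$ is yet a further mixture, so direct evaluation of the Gaussian integrals yields three different values of $\rate(F_{Y(1)\mid\tc},F)-\rate(F_{Y(0)\mid\tc},F)$ for small $\eps$. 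The one place I expect genuine bookkeeping is the uniform substitution of subsample ranks by $F_{Y\mid W=w}$ when $n_w$ is random; this is handled cleanly by conditioning on $\{W_i\}_{i=1}^n$, applying DKW to the now-i.i.d.\ subsample, and then unconditioning.
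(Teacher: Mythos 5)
Your proposal is correct and follows essentially the same route as the paper's proof: reduce 2SLS to the Wald ratio, use \eqref{eq:rank_approx} (and its conditional analogue for the subsample ranks) to replace normalized ranks by the appropriate reference CDF, invoke the standard LATE decomposition so that always-taker and never-taker terms cancel, and finish with \eqref{eq:rate_alternative}. You are in fact somewhat more complete than the paper, which cites standard 2SLS theory for the complier decomposition, dispatches $\hat{\beta}_{\sls,1}$ and $\hat{\beta}_{\sls,0}$ with ``proved similarly,'' and gives no explicit example for the claim that the three limits are mutually different, whereas your Gaussian-mixture construction supplies one.
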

  Theorem \ref{thm:2sls} suggests that, unlike the rank-OLS estimator, the rank-2SLS estimator is no longer reference-robust. In addition, none of these estimands are as easily interpretable as the rank-ATE.

  Since LATE is the mean difference of potential outcomes for compliers, it is natural to consider the rank-LATE, defined as the rank-ATE for compliers: 
  \begin{equation}
    \label{eq:rate_IV}
    \text{rank-LATE} \triangleq \rate(F_{Y(1)\mid \tc}, F_{Y(0)\mid \tc}).
  \end{equation}
The following result shows that rank-LATE can be obtained by ranking the outcomes among the compliers.
  \begin{theorem}\label{thm:complier_ranking}
    Fix any $\zeta \in [0, 1]$. Let $\hat{F}_{Y(w)\mid \tc}$ be consistent estimates of $F_{Y(w)\mid \tc}$ in the sense of \eqref{eq:rank_approx}. Further let $\hat{\beta}_{\sls, \zeta, \tc}$ denote the 2SLS estimator applied to $\zeta \hat{F}_{Y(1)\mid \tc} + (1 - \zeta)\hat{F}_{Y(0)\mid \tc}$. Then, as $n\rightarrow \infty$, 
    \[\hat{\beta}_{\sls, \zeta, \tc}\stackrel{p}{\rightarrow} \rate(F_{Y(1)\mid \tc}, F_{Y(0)\mid \tc}).\]
  \end{theorem}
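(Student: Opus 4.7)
The plan is to combine the Wald-ratio form of the binary-IV 2SLS estimator with the partial additivity property of $\rate$ established in Section 2. First, define the oracle transformed outcome $\tilde{Y}_i^{*} = F_{\zeta}(Y_i)$ where $F_{\zeta} = \zeta F_{Y(1)\mid \tc} + (1-\zeta) F_{Y(0)\mid \tc}$, and let $\tilde{Y}_i = \hat{F}_\zeta(Y_i)$ denote its sample analogue with $\hat{F}_\zeta = \zeta \hat{F}_{Y(1)\mid \tc} + (1-\zeta) \hat{F}_{Y(0)\mid \tc}$. Since both transformed outcomes lie in $[0, 1]$ and the triangle inequality together with the assumed consistency gives $\max_i |\tilde{Y}_i - \tilde{Y}_i^{*}| \le \sup_{y}|\hat{F}_\zeta(y) - F_\zeta(y)| = O_{\P}(n^{-1/2})$, the sample means of $\tilde{Y}_i$ conditional on $Z_i = z$ differ from those of $\tilde{Y}_i^{*}$ by $o_{\P}(1)$. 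Combined with the fact that the 2SLS denominator $\hat{\E}[W\mid Z=1] - \hat{\E}[W\mid Z=0]$ has a strictly positive probability limit $\P(G=\tc)$ by Assumption \ref{as:IV}(c)--(d), the limit of $\hat{\beta}_{\sls,\zeta,\tc}$ coincides with the probability limit of the oracle Wald ratio built from $\tilde{Y}_i^{*}$.

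Second, under Assumptions \ref{as:SUTVA} and \ref{as:IV}, the standard LATE argument yields that the oracle Wald ratio converges to
\[\E\lb \tilde{Y}^{*}(1) - \tilde{Y}^{*}(0) \mid G = \tc\rb = \E\lb F_\zeta(Y(1)) - F_\zeta(Y(0)) \mid G = \tc\rb,\]
where $\tilde{Y}^{*}(w) = F_\zeta(Y(w))$. Conditional on $G = \tc$, $Y(w) \sim F_{Y(w)\mid \tc}$, so by the alternative representation \eqref{eq:rate_alternative},
\[\E\lb F_\zeta(Y(w)) \mid G = \tc\rb = \rate\lb F_{Y(w)\mid \tc}, F_\zeta\rb + \tfrac{1}{2}.\]
Therefore the limit equals $\rate(F_{Y(1)\mid \tc}, F_\zeta) - \rate(F_{Y(0)\mid \tc}, F_\zeta)$, which by partial additivity applied to $(F_1, F_0) = (F_{Y(1)\mid \tc}, F_{Y(0)\mid \tc})$ (using continuity from Assumption \ref{as:continuous_Y}) collapses to $\rate(F_{Y(1)\mid \tc}, F_{Y(0)\mid \tc})$, as claimed.

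The main obstacle is merely checking that the $O_{\P}(n^{-1/2})$ estimation error in $\hat{F}_{Y(w)\mid \tc}$ propagates benignly through the 2SLS formula, which follows from uniform boundedness of the transformed outcome and non-degeneracy of the first stage. The conceptual content of the result lies in the second step: the partial additivity of $\rate$ is precisely the algebraic identity that makes the identification robust to the choice of $\zeta \in [0, 1]$, so that any convex combination of $\hat{F}_{Y(1)\mid \tc}$ and $\hat{F}_{Y(0)\mid \tc}$ suffices to recover the rank-LATE without additional assumptions.
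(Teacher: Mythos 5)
Your proof is correct and follows essentially the same route as the paper: reduce to the oracle Wald ratio via the DKW-type consistency bound, apply the standard LATE limit to the transformed outcomes, rewrite via \eqref{eq:rate_alternative}, and collapse $\rate(F_{Y(1)\mid \tc}, F_\zeta) - \rate(F_{Y(0)\mid \tc}, F_\zeta)$ to the rank-LATE. You are in fact slightly more precise than the paper's own one-line proof, which invokes ``reference robustness'' where the algebraic identity actually being used is the partial additivity property you cite.
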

  Theorem \ref{thm:complier_ranking} implies that this new 2SLS estimator identifies the rank-LATE and enjoys reference robustness as the rank-OLS estimator under randomized assignments.

  The next result shows that $F_{Y(1)\mid \tc}$ and $F_{Y(0)\mid \tc}$ can be identified under no additional assumptions. 

  \begin{theorem}\label{thm:complier_ranking_estimator}
    For any $y\in \R, w, z\in \{0, 1\}$, let
  \[F_{wz}(y) = \P(Y\le y\mid W = w, Z = z).\]
  Then, in the setting of Theorem \ref{thm:2sls},
    \begin{equation*}
    F_{Y(1)\mid \tc} = \frac{\pi_{\ta} + \pi_{\tc}}{\pi_{\tc}}\lb F_{11} - \frac{\pi_{\ta}}{\pi_{\ta} + \pi_{\tc}}F_{10}\rb,
  \end{equation*}
    and
  \begin{equation*}
    F_{Y(0)\mid \tc} = \frac{\pi_{\tn} + \pi_{\tc}}{\pi_{\tc}}\lb F_{00} - \frac{\pi_{\tn}}{\pi_{\tn} + \pi_{\tc}}F_{01}\rb.
  \end{equation*}
  \end{theorem}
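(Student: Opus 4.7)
The plan is to follow the standard LATE identification strategy of \cite{imbens1994identification}, but applied to the conditional CDFs $y\mapsto \P(Y\le y \mid W=w, Z=z)$ rather than to conditional means. Under Assumption \ref{as:IV}(c) (monotonicity), the four $(W,Z)$ cells have a clean type composition: when $Z=0$ the only units with $W=1$ are always-takers, when $Z=1$ the only units with $W=0$ are never-takers, the cell $(W=1, Z=1)$ mixes always-takers and compliers, and the cell $(W=0, Z=0)$ mixes never-takers and compliers. Combined with exclusion restriction (a) and ignorability (b), this pins down each observed joint probability $\P(Y\le y, W=w\mid Z=z)$ as an identifiable mixture of $\pi_g F_{Y(w)\mid g}$.

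First I would derive the two key identities for $w=1$. By (a)-(b),
\begin{align*}
\P(Y\le y, W=1 \mid Z=0) &= \P(Y(1)\le y, G=\ta) = \pi_{\ta} F_{Y(1)\mid \ta}(y), \\
\P(Y\le y, W=1 \mid Z=1) &= \P(Y(1)\le y, G\in\{\ta,\tc\}) = \pi_{\ta} F_{Y(1)\mid \ta}(y) + \pi_{\tc} F_{Y(1)\mid \tc}(y).
\end{align*}
Subtracting the first from the second isolates $\pi_{\tc} F_{Y(1)\mid \tc}(y)$. Writing $\P(Y\le y, W=1\mid Z=z) = \P(W=1\mid Z=z)\, F_{1z}(y)$ and noting that $\P(W=1\mid Z=1) = \pi_{\ta}+\pi_{\tc}$ and $\P(W=1\mid Z=0) = \pi_{\ta}$ (both consequences of monotonicity plus ignorability), division by $\pi_{\tc}$ gives the stated formula for $F_{Y(1)\mid \tc}$.

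Next I would repeat the argument symmetrically for $w=0$ by conditioning on $W=0$: the cell $(W=0,Z=1)$ consists purely of never-takers, while $(W=0,Z=0)$ is a $(\pi_{\tn},\pi_{\tc})$-mixture of never-takers and compliers. Subtracting, using $\P(W=0\mid Z=0) = \pi_{\tn}+\pi_{\tc}$ and $\P(W=0\mid Z=1) = \pi_{\tn}$, and dividing by $\pi_{\tc}$ yields the second formula. The relevance condition (d) ensures $\pi_{\tc}>0$, so the divisions are legitimate.

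There is essentially no hard step: the entire proof is bookkeeping once the type composition of each $(W,Z)$ cell is correctly identified from monotonicity. The one place to be careful is making sure the combination of (a), (b), and the law of total probability is invoked correctly so that expressions like $\P(Y(1)\le y, G=\ta)$ really are recovered from the observable quantity $\P(Y\le y, W=1\mid Z=0)$; this is where exclusion restriction (equating $Y(1,z)$ across $z$) and ignorability (making $Z$ independent of the latent $(Y(1),G)$) both enter.
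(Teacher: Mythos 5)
Your proposal is correct and follows essentially the same argument as the paper: use monotonicity to identify the type composition of each $(W,Z)$ cell, express $F_{11}$ as a $\lb\pi_{\ta},\pi_{\tc}\rb$-mixture of the always-taker and complier distributions of $Y(1)$ with $F_{10}$ identifying the always-taker component, and solve (symmetrically for $w=0$). Passing through the joint probabilities $\P(Y\le y, W=w\mid Z=z)$ rather than directly through the conditional CDFs is only a cosmetic difference in normalization.
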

 
  For each $w,z\in \{0,1\}$, $F_{wz}$ can be consistently estimated by the empirical CDF of $Y_i$s among the units with $W_i = w, Z_i = z$. Classical IV theory \citep{imbens1994identification} implies that
  \[\frac{n_{10}}{n_0}\stackrel{p}{\rightarrow} \pi_{\ta}, \quad \frac{n_{01}}{n_1}\stackrel{p}{\rightarrow} \pi_{\tn}, \quad \frac{n_{11}}{n_1} - \frac{n_{10}}{n_0}\stackrel{p}{\rightarrow} \pi_{\tc}\]
  where $n_{wz}$ is the number of units with $W_i = w, Z_i = z$ and $n_{z}$ is the number of units with $Z_i = z$.
  
\subsection{Rank-DiD}
We focus on the standard DiD setting with one pre-treatment period, indexed by $0$, and one post-treatment period, indexed by $1$. For each $t=0, 1$, let $(Y_{it}(1), Y_{it}(0))$ denote the potential outcomes of unit $i$ at time $t$ and $W_i$ denote the treatment status at time $1$. Further let $\RY_{it}$ denote the rank of $Y_{it}$ among $\{Y_{jt}: j=1,\ldots,n\}$. Clearly,
\[\RY_{it} = n\hat{F}_{Y_t,n}(Y_{it}),\]
where $\hat{F}_{Y_t,n}$ is the empirical CDF of $\{Y_{jt}: j=1,\ldots,n\}$. 

Throughout the section we assume $(Y_{i0}(1), Y_{i0}(0), Y_{i1}(1), Y_{i1}(0), W_i)$ are i.i.d. and denote by $(Y_0(1), Y_0(0), Y_1(1), Y_1(0), W)$ a generic draw. Similar to the cross-sectional case, we make the following assumptions.
\begin{assumption}\label{as:SUTVA_did}
  For all $i = 1,\ldots, n$,
  \[Y_{i1} = Y_{i1}(1)W_i + Y_{i1}(0)(1 - W_i), \quad Y_{i0} = Y_{i0}(0).\]
\end{assumption}
\begin{assumption}\label{as:continuous_Y_did}
  $F_{Y_t(w)}$ is absolutely continuous with respect to the Lebesgue measure on $\R$ for any $w,t\in \{0, 1\}$.
\end{assumption}

The standard rank-DiD estimator is defined as
\begin{equation}
  \label{eq:did}
  \hat{\beta}_{\did} = \frac{1}{n_1}\sum_{i: W_i=1}\lb\frac{\RY_{i1}}{n} - \frac{\RY_{i0}}{n}\rb - \frac{1}{n_0}\sum_{i: W_i=0}\lb\frac{\RY_{i1}}{n} - \frac{\RY_{i0}}{n}\rb.
\end{equation}
Without rank transformation, the usual estimand for DiD is the average treatment effect on the treated (ATT) $\E[Y(1) - Y(0)\mid W = 1]$. A natural definition of rank-ATT is
\begin{equation}
  \label{eq:rate_did}
  \beta_{\did}^{*} = \rate(F_{Y_1(1)\mid W=1}, F_{Y_1(0)\mid W=1}).
\end{equation}
The parallel trend assumption for original outcomes states that
\[\E[Y_1(0) - Y_0(0)\mid W = 1] = \E[Y_1(0) - Y_0(0)\mid W = 0].\]
This can be rewritten as
\[\E[Y_1(0)\mid W=1] - \E[Y_1(0)\mid W = 0] = \E[Y_0(0)\mid W=1] - \E[Y_0(0)\mid W = 0].\]
The left-hand side is the mean difference between $F_{Y_1(0)\mid W = 1}$ and $F_{Y_1(0)\mid W=0}$ and the right-hand side is the mean difference between $F_{Y_0(0)\mid W = 1}$ and $F_{Y_0(0)\mid W=0}$. Motivated by this expression, we define the rank parallel trend assumption as follows.
\begin{assumption}\label{as:rank_PT}
  $\rate(F_{Y_1(0)\mid W=1}, F_{Y_1(0)\mid W=0}) = \rate(F_{Y_0(0)\mid W=1}, F_{Y_0(0)\mid W=0})$.
\end{assumption}
Note that Assumption \ref{as:rank_PT} is not equivalent to
\begin{equation}
  \label{eq:alternative_rank_PT}
  \rate(F_{Y_1(0)\mid W=1}, F_{Y_0(0)\mid W=1}) = \rate(F_{Y_1(0)\mid W=0}, F_{Y_0(0)\mid W=0}).
\end{equation}
We show that the identifying assumption for CiC implies Assumption \ref{as:rank_PT} but not \eqref{eq:alternative_rank_PT}.
\begin{example}\label{ex:rank_preserve}
  Let $U_i$ be a unit-level time-invariant unobserved confounder $U_{i}$ and $f_0, f_1$ be two strictly increasing functions. Further, let 
  \[Y_{it}(0) = f_t(U_i), \quad i = 1,\ldots,n, t=0,1.\]
  This generalizes the classical two-way-fixed-effects model and is shown to be equivalent to the identifying assumption for CiC \citep{athey2006identification}:
  \begin{equation}\label{eq:CiC}
    F_{Y_1(0)\mid W = 1} = F_{Y_0(0)\mid W = 1}\circ F_{Y_0(0)\mid W = 0}^{-1} \circ F_{Y_1(0)\mid W = 0}
  \end{equation}
  Then the rank of $Y_{it}(0)$ among all control potential outcomes at time $t$ is invariant across time. Any definition of rank parallel trend should be met by this example. In fact, Assumption \ref{as:rank_PT} holds because, for $t = 0,1$,
  \[\rate(F_{Y_t(0)\mid W=1}, F_{Y_t(0)\mid W=0}) = \rate(F_{f_t(U)\mid W=1}, F_{f_t(U)\mid W=0}) = \rate(F_{U\mid W=1}, F_{U\mid W=0}),\]
  where the last step applies the invariance property of $\beta^{*}$. By contrast, \eqref{eq:alternative_rank_PT} does not hold unless $f_0 = f_1$, in which case the distribution of $Y_t(0)$ remains constant over time.
\end{example}

We prove that the effective estimand for the standard rank-DiD estimator is not the rank-ATT and could have a different sign. 
\begin{theorem}\label{thm:rank_did}
  Under Assumptions \ref{as:SUTVA_did}, \ref{as:continuous_Y_did}, and \ref{as:rank_PT}, as $n\rightarrow\infty$,
  \[\hat{\beta}_{\did}\stackrel{p}{\rightarrow}\rate(F_{Y_1(1)\mid W=1}, F_{Y_1(0)\mid W=0}) - \rate(F_{Y_1(0)\mid W=1}, F_{Y_1(0)\mid W=0}).\]
  Without further assumptions, the right-hand side is not equal to $\beta^{*}_{\did}$. Furthermore, replacing $\RY_{it}$ by their ranks among the treated or control units at time $t$ does not change the limit.
\end{theorem}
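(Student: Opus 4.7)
\textbf{Proof plan for Theorem \ref{thm:rank_did}.} The plan is to reduce the empirical ranks to population CDFs via DKW, identify each conditional mean as a value of $\rate(\cdot,\cdot)$ using the mixture decomposition of $F_{Y_t}$ together with linearity and anti-symmetry of the rank-ATE, and finally invoke the rank parallel trend assumption to put the limit in the stated form.

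Concretely, I would first apply \eqref{eq:rank_approx} to both $\hat F_{Y_1,n}$ and $\hat F_{Y_0,n}$ so that $\RY_{it}/n = F_{Y_t}(Y_{it}) + O_{\P}(n^{-1/2})$ uniformly in $i$. Plugging this into \eqref{eq:did} and using the law of large numbers within each treatment arm reduces $\hat\beta_{\did}$ to
\[
\bigl(\E[F_{Y_1}(Y_1)\mid W=1] - \E[F_{Y_0}(Y_0)\mid W=1]\bigr) - \bigl(\E[F_{Y_1}(Y_1)\mid W=0] - \E[F_{Y_0}(Y_0)\mid W=0]\bigr)
\]
up to $o_\P(1)$. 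Next, under Assumption \ref{as:SUTVA_did} the marginals decompose as $F_{Y_1} = p F_{Y_1(1)\mid W=1} + (1-p) F_{Y_1(0)\mid W=0}$ and $F_{Y_0} = p F_{Y_0(0)\mid W=1} + (1-p) F_{Y_0(0)\mid W=0}$, where $p = \P(W=1)$. I would substitute these, use \eqref{eq:rate_alternative} to write each $\E[F_{Y_t}(Y_t)\mid W=w]$ as $1/2$ plus a rank-ATE, and apply linearity plus anti-symmetry (valid under Assumption \ref{as:continuous_Y_did}) to kill the diagonal terms $\rate(F_{Y_t(w)\mid W=w},F_{Y_t(w)\mid W=w})=0$. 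The $1/2$ offsets and the factors of $p$ and $1-p$ then combine so that the four-term expression collapses to
\[
\rate(F_{Y_1(1)\mid W=1}, F_{Y_1(0)\mid W=0}) - \rate(F_{Y_0(0)\mid W=1}, F_{Y_0(0)\mid W=0}).
\]
Finally, Assumption \ref{as:rank_PT} replaces the second term by $\rate(F_{Y_1(0)\mid W=1}, F_{Y_1(0)\mid W=0})$, yielding the claimed limit.

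For the ``not equal to $\beta^{*}_{\did}$'' claim, I would exhibit a simple counterexample along the lines of Example \ref{ex:rank_preserve}: take $Y_{it}(0) = f_t(U_i)$ and add a treatment effect to $Y_{i1}(1)$, with $U\mid W=1$ and $U\mid W=0$ having distinctly different distributions. Then the extra term $\rate(F_{Y_1(0)\mid W=1}, F_{Y_1(0)\mid W=0})$ is non-zero, while the rank-ATT $\rate(F_{Y_1(1)\mid W=1}, F_{Y_1(0)\mid W=1})$ is a different (and in general differently-signed) functional because it compares to $F_{Y_1(0)\mid W=1}$ rather than $F_{Y_1(0)\mid W=0}$. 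A one-parameter family in the selection gap can be tuned to make the two estimands opposite in sign. For the reference-robustness part, I would repeat the DKW + LLN computation with ranks restricted to $\{j: W_j = 1\}$ (respectively $\{j: W_j=0\}$) at each time $t$: the treated (resp.\ control) average becomes $1/2 - 1/2 = 0$ by anti-symmetry, while the other arm's average produces exactly the same two rank-ATEs as above, so the DiD difference is unchanged.

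The main obstacle is the bookkeeping in step two: keeping the mixture weights $p$ and $1-p$ straight across the four conditional expectations and verifying that the diagonal rank-ATEs vanish in precisely the combination needed so that the $p$'s disappear and the $(1-p)$ terms do not, leaving an unweighted difference. Once that cancellation is carried out cleanly, the role of Assumption \ref{as:rank_PT} is just a one-line substitution, and the reference-change statement follows from the same calculation with trivially modified limiting reference CDFs.
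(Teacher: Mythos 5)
Your proposal is correct and follows essentially the same route as the paper: DKW reduction of ranks to $F_{Y_t}$, the mixture decomposition of $F_{Y_t}$ combined with linearity and anti-symmetry of $\rate$ (which is exactly the paper's ``partial additivity'' property, re-derived inline in your step two), and a final substitution via Assumption \ref{as:rank_PT}; the reference-robustness argument is likewise the same. Your explicit counterexample sketch for the non-identification claim is actually more detailed than the paper, which simply asserts it.
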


To identify the rank-LATE, we need to rank outcomes based on another reference distribution. Let $\hat{F}_{Y_1(0)\mid W=1}$ be an estimate of the counterfactual distribution $F_{Y_1(0)\mid W=1}$. We define a modified rank-DiD estimator by replacing $\RY_{i1} / n$ with $\hat{F}_{Y_1(0)\mid W=1}(Y_i)$:
\begin{equation}
  \label{eq:mdid}
  \hat{\beta}_{\mdid} = \frac{1}{n_1}\sum_{i: W_i=1}\lb\hat{F}_{Y_1(0)\mid W=1}(Y_i) - \frac{\RY_{i0}}{n}\rb - \frac{1}{n_0}\sum_{i: W_i=0}\lb \hat{F}_{Y_1(0)\mid W=1}(Y_i) - \frac{\RY_{i0}}{n}\rb.
\end{equation}

\begin{theorem}\label{thm:rank_mdid}
  Assume that $\hat{F}_{Y_1(0)\mid W=1}$ is a consistent estimate of $F_{Y_1(0)\mid W=1}$ in the sense of \eqref{eq:rank_approx}. In the setting of Theorem \ref{thm:rank_did}, as $n\rightarrow\infty$,
  \[\hat{\beta}_{\did}\stackrel{p}{\rightarrow}\beta^{*}_{\did}.\]
\end{theorem}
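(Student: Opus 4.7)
The plan is to reduce $\hat{\beta}_{\mdid}$ to a sum of four sample means whose population analogues can be recognized as rank-ATEs via \eqref{eq:rate_alternative}, and then to let Assumption \ref{as:rank_PT} cancel the three unwanted terms. First I would use \eqref{eq:rank_approx} together with the assumed uniform consistency $\sup_y |\hat{F}_{Y_1(0)\mid W=1}(y)-F_{Y_1(0)\mid W=1}(y)|=o_{\P}(1)$ to replace $\RY_{i0}/n$ by $F_{Y_0}(Y_{i0})$ and $\hat{F}_{Y_1(0)\mid W=1}(Y_i)$ by $F_{Y_1(0)\mid W=1}(Y_i)$ throughout \eqref{eq:mdid}, incurring error $o_{\P}(1)$ since the approximation is uniform in $y$ and the sums are averages bounded by these suprema.

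Next, by the law of large numbers applied separately to the treated and control subsamples (using $n_1,n_0\to\infty$), the remaining empirical averages converge in probability to
\begin{equation*}
\lb A - B\rb - \lb C - D\rb,
\end{equation*}
where $A=\E[F_{Y_1(0)\mid W=1}(Y_1)\mid W=1]$, $B=\E[F_{Y_0}(Y_0)\mid W=1]$, $C=\E[F_{Y_1(0)\mid W=1}(Y_1)\mid W=0]$, $D=\E[F_{Y_0}(Y_0)\mid W=0]$. Under Assumption \ref{as:SUTVA_did}, $Y_1\mid W=1\sim F_{Y_1(1)\mid W=1}$, $Y_1\mid W=0\sim F_{Y_1(0)\mid W=0}$, and $Y_0\mid W=w\sim F_{Y_0(0)\mid W=w}$, so by \eqref{eq:rate_alternative},
\begin{equation*}
A-\tfrac12=\rate\lb F_{Y_1(1)\mid W=1},F_{Y_1(0)\mid W=1}\rb=\beta^{*}_{\did},\qquad C-\tfrac12=\rate\lb F_{Y_1(0)\mid W=0},F_{Y_1(0)\mid W=1}\rb.
\end{equation*}
For $B$ and $D$, I would use the marginal decomposition $F_{Y_0}=\pi F_{Y_0(0)\mid W=1}+(1-\pi)F_{Y_0(0)\mid W=0}$ together with the linearity and anti-symmetry properties of $\rate$ to obtain $B-\tfrac12=(1-\pi)\rate(F_{Y_0(0)\mid W=1},F_{Y_0(0)\mid W=0})$ and $D-\tfrac12=-\pi\,\rate(F_{Y_0(0)\mid W=1},F_{Y_0(0)\mid W=0})$, so that $B-D=\rate(F_{Y_0(0)\mid W=1},F_{Y_0(0)\mid W=0})$.

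Combining the four pieces and applying anti-symmetry to $C-\tfrac12$, the limit becomes
\begin{equation*}
\beta^{*}_{\did}+\rate(F_{Y_1(0)\mid W=1},F_{Y_1(0)\mid W=0})-\rate(F_{Y_0(0)\mid W=1},F_{Y_0(0)\mid W=0}),
\end{equation*}
and the last two terms cancel by Assumption \ref{as:rank_PT}. I do not expect a real obstacle: the only non-routine step is recognizing that the pre-period sample means land on a \emph{mixture} reference $F_{Y_0}$, so that linearity is needed to rewrite $B-D$ as a single rank-ATE evaluated on control potential outcomes, which is precisely the object pinned down by the rank parallel trend. (I would also note in passing the apparent typo in the statement: the displayed limit should read $\hat{\beta}_{\mdid}\stackrel{p}{\to}\beta^{*}_{\did}$.)
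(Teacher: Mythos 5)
Your proposal is correct and follows essentially the same route as the paper's proof: the same four-term decomposition of the limit, the same identification of $A-\tfrac12$ with $\beta^{*}_{\did}$ and of $\tfrac12-C$ with $\rate(F_{Y_1(0)\mid W=1},F_{Y_1(0)\mid W=0})$, and the same cancellation via Assumption \ref{as:rank_PT}; your derivation of $B-D$ from linearity plus anti-symmetry is just the paper's partial-additivity step (used in \eqref{eq:second_diff}) written out explicitly. The typo you flag ($\hat{\beta}_{\did}$ where the statement means $\hat{\beta}_{\mdid}$) is indeed present in the theorem as stated.
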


Unfortunately, the rank parallel trend assumption does not imply identification of $F_{Y_1(0)\mid W=1}$. Nevertheless, it can be consistently estimated under the identifying assumption of CiC \citep{athey2006identification} or distributional DiD \citep{roth2023parallel}. Since the former implies the rank parallel trend condition, the rank-ATT can be identified under the same assumptions as CiC. 

\subsection{Rank-RDD}
We consider the standard RDD setting where $X_i$ is a continuous one-dimensional running variable and $x^*$ is the common cutoff for all units. For the purpose of  exposition, we focus on sharp RDDs where 
\begin{equation}\label{eq:sharp_RDD}
W_i = I(X_i\ge x^*).
\end{equation}
The extensions to fuzzy RDDs is straightforward yet mathematically involved. Let $(Y_i(1), Y_i(0))$ denote the potential outcomes. Following \cite{imbens2008regression}, we make the distributional continuity assumption.
\begin{assumption}\label{as:RDD}
(Continuity of conditional distribution function) for $w\in \{0,1\}$, $F_{Y(w)\mid X=x}$ weakly converges to $F_{Y(w)\mid X=x^*}$ as $x\rightarrow x^*$.
\end{assumption}
As \cite{imbens2008regression} noted, for level outcomes, the continuity assumption is stronger than required for identification but practically indistinguishable with the weakest possible assumption in most cases. In addition, when the outcomes are ranked, we need this stronger continuity assumption.

While there are many commonly-used RDD estimators for level outcomes \citep[e.g.][]{imbens2008regression,lee2010regression,calonico2014robust,imbens2019optimized,eckles2020noise,cattaneo2022regression}, we focus on the simplest kernel estimator
\begin{equation}\label{eq:rank_RDD}
\hat{\beta}_{\rdd} = \frac{\sum_{i: X_i \ge x^*}\frac{\RY_i}{n} K\lb\frac{X_i - x^*}{h_n}\rb}{\sum_{i: X_i \ge x^*} K\lb\frac{X_i - x^*}{h_n}\rb} - \frac{\sum_{i: X_i < x^*}\frac{\RY_i}{n} K\lb\frac{X_i - x^*}{h_n}\rb}{\sum_{i: X_i < x^*} K\lb\frac{X_i - x^*}{h_n}\rb}.
\end{equation}
Further, let $\hat{\beta}_{\rdd, 1}$ and $\hat{\beta}_{\rdd, 0}$ denote the same estimator but with $\RY_i/n$ replaced by $\RY_{i,1}/n_1$ and $\RY_{i,0}/n_0$, respectively. Here we make the following standard assumptions on the density of $X$ and the kernel. 

\begin{assumption}\label{as:kernel}
  \begin{enumerate}[(a)]
  \item $X$ has a Lipschitz continuous density $p$ (with respect to the Lebesgue measure) with $x^{*}$ in the interior of the support and $p(x^{*}) \in (0, \infty)$.
  \item $K$ has a bounded support with 
  \[ \int K^2(u)du < \infty, \quad \int_{0}^\infty K(u)du \cdot \int_{-\infty}^0 K(u)du \neq 0.\]
  \end{enumerate}
\end{assumption}

The following result shows that the kernel RDD estimator does not satisfy the reference robustness and none of these estimators converge to the rank-ATE at cutoff $x^*$: 
\begin{equation}\label{eq:cutoff_rank_LATE}
\rate(F_{Y(1)\mid x^*}, F_{Y(0)\mid x^*}),
\end{equation}
where $F_{Y(w)\mid x}$ denotes the conditional distribution of $Y(w)$ given $X = x$.

\begin{theorem}\label{thm:rank_rdd}
  Assume that $h_n\rightarrow 0$ and $nh_n\rightarrow \infty$. Under Assumptions \ref{as:SUTVA}, \ref{as:continuous_Y}, \ref{as:RDD}, and \ref{as:kernel}, as $n\rightarrow \infty$,
    \[\hat{\beta}_{\rdd}\stackrel{p}{\rightarrow} \rate(F_{Y(1)\mid x^*}, F_{Y}) - \rate(F_{Y(0)\mid x^*}, F_{Y}),\]
    \[\hat{\beta}_{\rdd,1}\stackrel{p}{\rightarrow} \rate(F_{Y(1)\mid x^*}, F_{Y\mid W=1}) - \rate(F_{Y(0)\mid x^*}, F_{Y\mid W=1}),\]
    \[\hat{\beta}_{\rdd,0}\stackrel{p}{\rightarrow} \rate(F_{Y(1)\mid x^*}, F_{Y\mid W=0}) - \rate(F_{Y(0)\mid x^*}, F_{Y\mid W=0}),\]
\end{theorem}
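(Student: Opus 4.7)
The plan is to combine two ingredients already used heavily in earlier theorems of the paper. First, the DKW approximation \eqref{eq:rank_approx} allows one to trade sample ranks for population CDFs at $O_{\P}(1/\sqrt{n})$ uniform cost, which will be negligible after kernel smoothing. Second, standard Nadaraya--Watson asymptotics at a boundary point, combined with the distributional continuity in Assumption \ref{as:RDD}, pin down the one-sided limits. I will carry out the argument for $\hat{\beta}_{\rdd}$ in detail; the reference-ranked variants $\hat{\beta}_{\rdd,1}$ and $\hat{\beta}_{\rdd,0}$ follow by the same recipe with $F_Y$ replaced by $F_{Y\mid W=1}$ or $F_{Y\mid W=0}$ and with DKW applied to the empirical CDF of the treated or control subsample.

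Write $K_i = K((X_i - x^*)/h_n)$. Under Assumption \ref{as:kernel}, a routine kernel density calculation yields $(nh_n)^{-1}\sum_{i:X_i\ge x^*}K_i \stackrel{p}{\to} p(x^*)\int_0^\infty K(u)\,du > 0$ and symmetrically on the left. Substituting $F_Y(Y_i)$ for $\RY_i/n$ inside the numerators of \eqref{eq:rank_RDD} creates an additive error of order $O_{\P}(1/\sqrt{n}) \cdot O_{\P}(nh_n) = O_{\P}(\sqrt{n}\,h_n)$ in each numerator and hence $O_{\P}(1/\sqrt{n}) = o_{\P}(1)$ in the ratio. So it suffices to analyze the two one-sided Nadaraya--Watson estimators of the conditional expectation of $F_Y(Y)$ given $X$.

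The standard boundary kernel result then gives, in probability,
\[\frac{\sum_{i: X_i\ge x^*} F_Y(Y_i) K_i}{\sum_{i: X_i\ge x^*} K_i} \to \lim_{x\downarrow x^*}\E[F_Y(Y)\mid X=x], \qquad \frac{\sum_{i: X_i < x^*} F_Y(Y_i) K_i}{\sum_{i: X_i < x^*} K_i} \to \lim_{x\uparrow x^*}\E[F_Y(Y)\mid X=x].\]
By \eqref{eq:sharp_RDD} and Assumption \ref{as:SUTVA}, $Y = Y(1)$ on $\{X \ge x^*\}$ and $Y = Y(0)$ on $\{X < x^*\}$, so these conditional expectations equal $\E[F_Y(Y(1))\mid X=x]$ and $\E[F_Y(Y(0))\mid X=x]$. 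Assumption \ref{as:RDD} yields weak convergence of $F_{Y(w)\mid X=x}$ to $F_{Y(w)\mid X=x^*}$ as $x\to x^*$, and because $F_Y$ is bounded and continuous under Assumption \ref{as:continuous_Y}, the portmanteau theorem upgrades this to convergence of expectations, producing the boundary limits $\E[F_Y(Y(w))\mid X=x^*]$ for $w=0,1$. Invoking the alternative expression \eqref{eq:rate_alternative}, $\E[F_Y(Y(w))\mid X=x^*] = \rate(F_{Y(w)\mid x^*}, F_Y) + \tfrac{1}{2}$, so the $\tfrac{1}{2}$'s cancel in the difference and we recover the stated limit. The proofs for $\hat{\beta}_{\rdd,1}$ and $\hat{\beta}_{\rdd,0}$ replay this argument verbatim after substituting $F_{Y\mid W=1}$ or $F_{Y\mid W=0}$ for $F_Y$ everywhere.

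The main obstacle is handling the boundary kernel limit cleanly in a way that depends only on Assumption \ref{as:RDD}: mere mean continuity of $\E[Y\mid X=x]$ at $x^*$ would not suffice, because $F_Y$ is a nonlinear (though bounded and continuous) functional of $Y$, which is precisely why the paper has to strengthen mean continuity into full distributional continuity here. Everything else reduces to the DKW bound and textbook kernel asymptotics, so no new machinery is required beyond what is already used in Sections \ref{sec:rank_OLS_binary} and \ref{sec:other_reg}.
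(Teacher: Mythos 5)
Your proposal is correct and follows essentially the same route as the paper: replace $\RY_i/n$ by $F_Y(Y_i)$ via the DKW bound, apply one-sided kernel (Nadaraya--Watson) asymptotics to $\E[F_Y(Y)\mid X=x]$, and use the distributional continuity in Assumption \ref{as:RDD} together with boundedness and continuity of $F_Y$ to identify the boundary limits as $\rate(F_{Y(w)\mid x^*},F_Y)+\tfrac{1}{2}$. The only difference is that you invoke the boundary kernel limit as a standard result, whereas the paper proves it explicitly as Lemma \ref{lem:rank_rdd} (a mean--variance--Chebyshev calculation under Assumption \ref{as:kernel}); your handling of the cancelling $\tfrac{1}{2}$ terms and of the subsample empirical CDFs for $\hat{\beta}_{\rdd,1},\hat{\beta}_{\rdd,0}$ is, if anything, slightly more explicit than the paper's.
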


Since we only focus on identification, our results would continue to hold for local-polynomial estimators \citep[e.g.][]{imbens2008regression,calonico2014robust} using the standard asymptotic theory of local-polynomial regression estimators \citep{fan1992variable}.

Intuitively, to identify rank-ATE at the cutoff, we need to rank based on units near the cutoff. This motivates the following kernel U-statistic: 
\begin{equation}\label{eq:kernel_U}
\hat{\beta}_{\mrdd} = \frac{\sum_{i: X_i \ge x^*}\sum_{j: X_j < x^*} K\lb\frac{X_i - x^*}{h_n}\rb K\lb\frac{X_j - x^*}{h_n}\rb I(Y_j \le Y_i)}{\sum_{i: X_i \ge x^*}\sum_{j: X_j < x^*}K\lb\frac{X_i - x^*}{h_n}\rb K\lb\frac{X_j - x^*}{h_n}\rb }
\end{equation}
Note that \eqref{eq:kernel_U} is equivalent to \eqref{eq:rank_RDD} with $\RY_i/n$ replaced by a weighted rank
\[\frac{\sum_{j: X_j < x^*} K\lb\frac{X_j - x^*}{h_n}\rb I(Y_j \le Y_i)}{\sum_{j: X_j < x^*}K\lb\frac{X_j - x^*}{h_n}\rb }.\]
\begin{theorem}\label{thm:rank_mrdd}
  Assume that $h_n\rightarrow 0$ and $nh_n\rightarrow \infty$. Under Assumptions \ref{as:SUTVA}, \ref{as:continuous_Y} and \ref{as:RDD}, as $n\rightarrow \infty$,
    \[\hat{\beta}_{\mrdd}\stackrel{p}{\rightarrow} \rate(F_{Y(1)\mid x^{*}}, F_{Y(0)\mid x^{*}}).\]
\end{theorem}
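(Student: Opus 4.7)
The plan is to write $\hat{\beta}_{\mrdd} = N_n/D_n$, analyze numerator and denominator separately as kernel-smoothed averages, and finish with Slutsky's theorem. The denominator factors as $D_n = A_n B_n$ with $A_n = \sum_i K((X_i-x^*)/h_n) I(X_i \ge x^*)$ and analogously for $B_n$. Standard kernel density arguments under Assumption \ref{as:kernel}(a) (Lipschitz $p$ with $p(x^*) > 0$, bounded-support $K$) yield $A_n/(nh_n) \stackrel{p}{\rightarrow} p(x^*) \int_0^\infty K(u)\,du$ and $B_n/(nh_n) \stackrel{p}{\rightarrow} p(x^*) \int_{-\infty}^0 K(u)\,du$; Assumption \ref{as:kernel}(b) guarantees the product of the two half-integrals is nonzero, so $D_n/(nh_n)^2$ converges in probability to a strictly positive constant.

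The numerator $N_n = \sum_{i \ne j} K_i K_j I(Y_j \le Y_i) I(X_i \ge x^*) I(X_j < x^*)$ is a non-symmetric bivariate U-statistic (the diagonal vanishes because the two indicator conditions on $X$ are incompatible). Conditioning on $(X_i, X_j) = (x^* + h_n u, x^* + h_n v)$ with $u \ge 0 > v$ and using the sharp-RDD SUTVA to identify $Y_i = Y_i(1)$ on the right and $Y_j = Y_j(0)$ on the left, conditional independence gives
\[\P(Y_j \le Y_i \mid X_i = x^*+h_n u,\; X_j = x^*+h_n v) = \int F_{Y(0)\mid x^*+h_n v}(y) \, dF_{Y(1)\mid x^*+h_n u}(y).\]
By Assumption \ref{as:RDD}, both conditional CDFs converge weakly to $F_{Y(w)\mid x^*}$; since $F_{Y(0)\mid x^*}$ is continuous (Assumption \ref{as:continuous_Y}), the Portmanteau theorem yields pointwise convergence of this inner integral to $\E_{Z_1 \sim F_{Y(1)\mid x^*}}[F_{Y(0)\mid x^*}(Z_1)]$, which by \eqref{eq:rate_alternative} delivers the desired cutoff rank-ATE. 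Changing variables $u = (X_i - x^*)/h_n$, $v = (X_j - x^*)/h_n$ introduces $h_n^2$ from the Jacobian, and dominated convergence (compactly supported $K$ and locally bounded $p$ as envelope) identifies the limit of $\E[N_n]/(nh_n)^2$ as the same positive constant appearing in the denominator, multiplied by the cutoff rank-ATE.

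For concentration I would apply a Hoeffding-type decomposition. Each summand $K_i K_j I(\cdot)$ has second moment $O(h_n^2)$ (from $\E[K_i^2 I(X_i \ge x^*)] = O(h_n)$), and its one-argument projection has second moment $O(h_n^3)$. Counting index pairs and triples, $\Var(N_n) = O(n^2 h_n^2) + O(n^3 h_n^3)$, so $\Var(N_n/(nh_n)^2) = O((nh_n)^{-2}) + O((nh_n)^{-1}) = o(1)$ under $nh_n \to \infty$. Chebyshev combined with Slutsky then delivers the stated convergence of $\hat{\beta}_{\mrdd}$.

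The hard part is the dominated-convergence step inside the inner CDF integral: only the full distributional continuity of Assumption \ref{as:RDD}, not mere mean-continuity, permits passage to the limit here, since the integrand depends on $F_{Y(w)\mid X=x}$ rather than on $\E[Y(w)\mid X=x]$. This is precisely the technical reason the stronger continuity assumption is required for rank outcomes, as opposed to the level-outcome RDD where continuity of conditional means suffices.
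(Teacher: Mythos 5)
Your proposal follows essentially the same route as the paper's proof: the same numerator/denominator decomposition, the same factorization and one-sided kernel limits for the denominator, the same U-statistic mean computation for the numerator using the distributional continuity of Assumption \ref{as:RDD} at the cutoff, and a second-moment bound yielding the same $O((nh_n)^{-2}) + O((nh_n)^{-1})$ variance rates before concluding with Chebyshev and Slutsky. The argument is correct and matches the paper's in all substantive respects.
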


\section{Summary and discussions}
In this paper, we study the effective estimands and causal interpretations of popular econometric methods with ranked outcomes or treatments. We introduce the rank-ATE as a primitive causal parameter that enjoys many desirable properties and serves as the building block for other estimands studied in this paper. It also allows us to generalize the estimands for 2SLS, DiD, and RDD to the setting with ranks, though we show that they are not identified by directly applying these methods to ranked outcomes due to the nonlinearity of rank-ATE. We develop alternative identification strategies based on different reference distributions for ranking. 

\subsection{An alternative primitive estimand}
  While the rank-ATE is a measure of the comparison between the distributions of treated and control potential outcomes, it cannot be written as an average of individual-level treatment effects, i.e., $\E[g(Y(1), Y(0))]$ \citep{chen2023logs}. An analogue of rank-ATE is
  \begin{equation}
    \label{eq:taustar}
    \rate^{*}(F_{Y(1)}, F_{Y(0)}) = \P(Y(1)\ge Y(0)) - \frac{1}{2} = \E\left[I(Y(1)\ge Y(0)) - \frac{1}{2}\right]. 
  \end{equation}
  It is not hard to find data generating distributions under which $\rate(F_{Y(1)}, F_{Y(0)})$ and $\rate^{*}(F_{Y(1)}, F_{Y(0)})$ have different signs. This is also referred to as the Hand's paradox in biostatistics \citep[e.g.][]{fay2018causal}.

  Apparently, $\rate^{*}(F_{Y(1)}, F_{Y(0)})$ is only partially identifiable without further assumptions. \cite{fan2010sharp} derive the tight lower and upper bounds on $\tau^{*}$:
  \[\tau_{L}^{*}(F_{Y(1)}, F_{Y(0)}) = \max\left\{\sup_{y}(F_{Y(0)}(y) - F_{Y(1)}(y)), 0\right\} - \frac{1}{2}, \]
  and
  \[\tau_{R}^{*}(F_{Y(1)}, F_{Y(0)}) = \min\left\{\inf_{y}(F_{Y(0)}(y) - F_{Y(1)}(y)), 0\right\} + \frac{1}{2}.\]
  With covariates, the bounds can be sharpened \citep{lee2021partial}.

  However, to consistently estimate and quantify uncertainty of the sharper bounds, \cite{lee2021partial} requires a correct model for the conditional distributions of potential outcomes given the covariates $X$, which is arguably too strong when $X$ is continuous, mixed-type, high-dimensional, or unstructured (e.g., texts). \cite{ji2023model} overcomes this issue by exploiting Kantorovich duality in optimal transport theory. In particular, our method can wrap around any estimates of $\P(Y(1)\mid X)$ and $\P(Y(0)\mid X)$ and generate valid lower and upper bounds for $\tau^{*}$ even if the estimates are completely off. In the meantime, our bounds are efficient if the estimates of conditional distributions are consistent with the semiparametric rates (i.e., $O(n^{-1/4})$). 

\bibliography{rank_reg_causal}
\bibliographystyle{plainnat}

  \appendix

  \section{Proofs}\label{app:proofs}
  \subsection{Proofs for OLS}
    \begin{proof}[\textbf{Proof of Theorem \ref{thm:binary_noz_randomized}}]
      It is easy to see that the estimator is equivalent to the difference-in-means estimator:
      \begin{equation}\label{eq:hatbeta_noz}
        \hat{\beta}_{\nox} = \frac{1}{n_1}\sum_{i: W_i=1}\frac{\RY_i}{n} - \frac{1}{n_0}\sum_{i: W_i = 0}\frac{\RY_i}{n}.
        \end{equation}
    By \eqref{eq:rank_approx},
    \[\hat{\beta}_{\nox} = \hat{\beta}_{\nox}^{*} + O_\P(1/\sqrt{n}) \quad \text{where } \hat{\beta}_{\nox}^{*} = \frac{1}{n_1}\sum_{i: W_i=1}F_Y(Y_i) - \frac{1}{n_0}\sum_{i: W_i = 0}F_Y(Y_i).\]
    We can rewrite $\hat{\beta}_{\nox}^{*}$ as
    \[\hat{\beta}_{\nox}^{*} = \frac{1}{n}\sum_{i=1}^{n}\lb \frac{F_Y(Y_i)W_i}{n_1/n} - \frac{F_Y(Y_i)(1-W_i)}{n_0/n}\rb.\]
    Note that $n_1/n \stackrel{p}{\rightarrow}\pi$ and $n_0/n \stackrel{p}{\rightarrow}1 - \pi$. By Law of Large Number,
    \begin{equation}\label{eq:limit_betahat}
      \hat{\beta}_{\nox}^{*}\stackrel{p}{\rightarrow}\frac{\E[F_Y(Y)W]}{\pi} - \frac{\E[F_Y(Y)(1-W)]}{1-\pi} = \frac{\E[F_Y(Y(1))W]}{\pi} - \frac{\E[F_Y(Y(0))(1-W)]}{1-\pi}.
      \end{equation}
    where the last step invokes SUTVA. Under Assumption \ref{as:randomized},
    \[\E[F_Y(Y(1))W] = \pi\E[F_Y(Y(1))], \quad \E[F_Y(Y(0))(1-W)] = (1-\pi)\E[F_Y(Y(0))].\]
    Thus,
    \[\hat{\beta}_{\nox}^{*}\stackrel{p}{\rightarrow} \E[F_Y(Y(1))] - \E[F_Y(Y(0))].\]
    By \eqref{eq:mixture},
    \[\E[F_Y(Y(1))] = \pi \E[F_{Y(1)}(Y(1))] + (1 - \pi)\E[F_{Y(0)}(Y(1))] = \frac{\pi}{2} + (1 - \pi)\int F_{Y(0)}(y) dF_{Y(1)}(y),\]
    where the last step uses the fact that $F_{Y(1)}(Y(1))\sim \mathrm{Unif}([0,1])$ when $F_{Y(1)}$ is continuous. Similarly,
    \[\E[F_Y(Y(0))] =  \frac{1-\pi}{2} + \pi\int F_{Y(1)}(y) dF_{Y(0)}(y).\]
    Using integration by part, we have
    \[\int F_{Y(1)}(y) dF_{Y(0)}(y) = 1 - \int F_{Y(0)}(y) dF_{Y(1)}(y).\]
    By \eqref{eq:rate_alternative}, 
    \[\E[F_Y(Y(1))] - \E[F_Y(Y(0))] = \rate(F_{Y(1)}, F_{Y(0)}).\]
  \end{proof}

  \begin{proof}[\textbf{Proof of Theorem \ref{thm:reference_robustness}}]
    Similar to \eqref{eq:hatbeta_noz}, we have
    \[\hat{\beta}_{\nox, 1} = \frac{1}{n_1}\sum_{i}W_i\frac{\RY_{i,1}}{n_1} - \frac{1}{n_0}\sum_{i}(1 - W_i)\frac{\RY_{i,1}}{n_1},\]
    and
    \[\hat{\beta}_{\nox, 0} = \frac{1}{n_1}\sum_{i}W_i\frac{\RY_{i,0}}{n_0} - \frac{1}{n_0}\sum_{i}(1 - W_i)\frac{\RY_{i,0}}{n_0}.\]
    By definition of $\RY_{i,1}$, we can rewrite it as
    \[\hat{\beta}_{\nox, 1} = \frac{1}{n_1^2}\sum_{i,j}W_iW_j I(Y_j\le Y_i) - \frac{1}{n_1n_0}\sum_{i,j}(1-W_i)W_j I(Y_j\le Y_i).\]
    Similarly,
    \[\hat{\beta}_{\nox, 0} = \frac{1}{n_1n_0}\sum_{i,j}W_i(1-W_j) I(Y_j\le Y_i) - \frac{1}{n_0^2}\sum_{i,j}(1-W_i)(1-W_j) I(Y_j\le Y_i).\]
    In the absence of ties, $I(Y_j\le Y_i) + I(Y_i \le Y_j) = 1$ for any $i\neq j$. Swapping the labels $i$ and $j$,
    \begin{align*}
      &\frac{1}{n_1^2}\sum_{i,j}W_iW_j I(Y_j\le Y_i) = \frac{1}{2n_1^2}\sum_{i\neq j}W_iW_j + \frac{1}{n_1^2}\sum_{i}W_i\\
      & = \frac{1}{2n_1^2}\left\{\lb \sum_{i}W_i\rb^2 - \sum_{i}W_i^2\right\} + \frac{1}{n_1^2}\sum_{i}W_i = \frac{1}{2n_1^2} (n_1^2 - n_1) + \frac{1}{n_1} = \frac{1}{2} + \frac{1}{2n_1}.
    \end{align*}
    Similarly,
    \[\frac{1}{n_0^2}\sum_{i,j}(1-W_i)(1-W_j)I(Y_j\le Y_i) = \frac{1}{2} + \frac{1}{2n_0}.\]
    In addition,
    \begin{align*}
      & \frac{1}{n_1n_0}\sum_{i,j}(1-W_i)W_j I(Y_j\le Y_i) + \frac{1}{n_1n_0}\sum_{i,j}W_i(1-W_j) I(Y_j\le Y_i)\\
      & = \frac{1}{n_1n_0}\sum_{i,j}(1-W_i)W_j I(Y_j\le Y_i) + \frac{1}{n_1n_0}\sum_{i,j}(1-W_i)W_j I(Y_i\le Y_j)\\
      & = \frac{1}{n_1n_0}\sum_{i,j}(1-W_i)W_j = 1.
    \end{align*}
    Let
    \[\td{\beta} = \frac{1}{n_0}\sum_{i}(1 - W_i)\frac{\RY_{i,0}}{n}.\]
    Then $\hat{\beta}_{\nox, 1}$ and $\hat{\beta}_{\nox, 0}$ can be rewritten as
    \[\hat{\beta}_{\nox, 1} = \frac{1}{2} - \td{\beta} + \frac{1}{2n_1}, \quad \hat{\beta}_{\nox, 1} = \frac{1}{2} - \td{\beta} - \frac{1}{2n_0}.\]
    Noting that
    \[\frac{\RY_i}{n} = \frac{n_1}{n}\frac{\RY_{i,1}}{n_1} + \frac{n_0}{n}\frac{\RY_{i,0}}{n_0},\]
    we have
    \[\hat{\beta}_{\nox} = \frac{n_1}{n}\hat{\beta}_{\nox, 1}+\frac{n_0}{n}\hat{\beta}_{\nox, 0} = \frac{1}{2} - \td{\beta}.\]
    The proof is then completed.
    \end{proof}

  \begin{proof}[\textbf{Proof of Theorem \ref{thm:binary_z_randomized}}]
    Under Assumption \ref{as:randomized}, 
    \[\Cov(W, X) = 0, \quad \Cov(W, W(X - \E[X])) = \E[W(X - \E[X])] - \E[W]\E[W(X - \E[X])] = 0.\]
    The population Frisch-Waugh-Lovell theorem implies that the limits of $\hat{\beta}$ and $\hat{\beta}_{\inter}$ both coincide with the limit of $\hat{\beta}_{\nox}$. 
  \end{proof}

\begin{proof}[\textbf{Proof of Theorem \ref{thm:weighted_average}}]
The population Frisch-Waugh-Lovell theorem and \eqref{eq:rank_approx} together imply that 
\[\hat{\beta} \stackrel{p}{\rightarrow} \frac{\E[(W - \td{\pi}(X))F_Y(Y)]}{\E[(W - \td{\pi}(X))^2]}.\]
The numerator can be decomposed into 
\[\E[(W - \td{\pi}(X))F_Y(Y(0))] + \E[(W - \td{\pi}(X))W(F_Y(Y(1)) - F_Y(Y(0))].\]
By the law of iterated expectation and Assumption \ref{as:strong_ignorability}, the first term can be expressed as 
\[\E[(\pi(X) - \td{\pi}(X))\E[F_Y(Y(0))\mid X]].\]
When condition (a) holds, it is clearly zero; when condition (b) holds, 
\[\E[(\pi(X) - \td{\pi}(X))\E[F_Y(Y(0))\mid X]] = \E[(\pi(X) - \td{\pi}(X))X^\top \eta^{*}] = 0,\]
because $\td{\pi}(X)$ is the linear projection of $\pi(X)$ and $\E[(\pi(X) - \td{\pi}(X))X] = 0$. Thus, the first term is zero under either condition (a) or (b) and the numerator can be simplified into 
\begin{align*}
& \E[(W - \td{\pi}(X))W(F_Y(Y(1)) - F_Y(Y(0))]\\
& = \E[\pi(X) (1 - \td{\pi}(X))\E[F_Y(Y(1)) - F_Y(Y(0))\mid X]]\\
& = \E[w(X) \rate(F_{Y(1)\mid X}, F_{Y(0)\mid X})].
\end{align*}
Now we turn to the denominator. Since $\td{\pi}(X)$ is the linear projection of $\pi(X)$, 
\[\E[\td{\pi}(X)(\pi(X) - \td{\pi}(X))] = 0.\]
Thus, 
\[\E[(W - \td{\pi}(X))^2] = \E[\pi(X) - 2\pi(X)\td{\pi}(X) + \td{\pi}^2(X)] = \E[w(X)].\]
The proof is then completed.
\end{proof}

    \begin{proof}[\textbf{Proof of Theorem \ref{thm:general_noz_randomized}}]
      Under Assumption \ref{as:randomized},
      \[\Cov(h_n(W), X) = 0.\]
      By population Frisch-Waugh-Lovell theorem, \eqref{eq:rank_approx}, and Assumption \ref{as:limit_h}, we can show that
      \begin{equation}
        \hat{\beta}_{h_n}\stackrel{p}{\rightarrow} \frac{\Cov(F_Y(Y), h(W))}{\Var[h(W)]}.
        \end{equation}
    By Assumption \ref{as:continuous_Y_cont}, $F_Y$ is continuous. Thus, $\E[F_Y(Y)] = 1/2$ and
    \begin{align*}
      &\Cov(F_Y(Y), h(W)) = \E\left[\left\{F_Y(Y) - \frac{1}{2}\right\}h(W)\right]\\
      & = \int \E\left[F_Y(Y) - \frac{1}{2} \mid W=w\right] h(w)dF_{W}(w)\\
      & = \int \E\left[F_Y(Y(w)) - \frac{1}{2}\mid W=w\right] h(w)dF_{W}(w) \quad \text{(by Assumption \ref{as:SUTVA_cont})}\\
      & = \int \E\left[F_Y(Y(w)) - \frac{1}{2}\right] h(w)dF_{W}(w) \quad \text{(by Assumption \ref{as:randomized_cont})}\\
      & = \int \rate(F_{Y(w)}, F_Y)h(w)dF_{W}(w) \quad \text{(by \eqref{eq:rate_alternative})}\\
      & = \int \rate(F_{Y(w)}, F_{Y(\td{w})})h(w)dF_{W}(w)dF_{W}(\td{w}) \quad \text{(by linearity)}\\
      & = \int_{w\neq \td{w}} \rate(F_{Y(w)}, F_{Y(\td{w})})h(w)dF_{W}(w)dF_{W}(\td{w}) \quad \text{(by Assumption \ref{as:continuous_Y_cont} and anti-symmetry)}\\
      & = \int_{w > \td{w}} \rate(F_{Y(w)}, F_{Y(\td{w})})(h(w) - h(\td{w}))dF_{W}(w)dF_{W}(\td{w})\quad \text{(by Assumption \ref{as:continuous_Y_cont} and anti-symmetry)}\\
      & = \E\left[b(W, \td{W})\rate\lb F_{Y(W)}, F_{Y(\td{W})}\rb\right].
    \end{align*}
    The proof is completed by applying the well-known variance formula $\Var[X] = (1/2)\E[(X - X')^2] = \E[(X - X')^2 I(X > X')]$ with $X = h(W)$.
  \end{proof}

  \subsection{Proofs for 2SLS}

  \begin{proof}[\textbf{Proof of Theorem \ref{thm:2sls}}]
    Let $m_1$ and $m_0$ denote the number of units with $Z_i=1$ and $Z_i=0$, respectively. The 2SLS estimator can be expressed as 
    \[\hat{\beta}_{\sls} = \frac{\frac{1}{m_1}\sum_{i: Z_i=1}\frac{\RY_i}{n} - \frac{1}{m_0}\sum_{i: Z_i=0}\frac{\RY_i}{n}}{\frac{1}{m_1}\sum_{i: Z_i=1}W_i - \frac{1}{m_0}\sum_{i: Z_i=0}W_i}.\]
    By Assumption \ref{as:IV} (d), 
    \[\frac{1}{m_1}\sum_{i: Z_i=1}W_i - \frac{1}{m_0}\sum_{i: Z_i=0}W_i\stackrel{p}{\rightarrow} \P(W=1\mid Z=1) - \P(W=1\mid Z=0) > 0.\]
    By \eqref{eq:rank_approx}, 
    \begin{align*}
    & \frac{1}{m_1}\sum_{i: Z_i=1}\frac{\RY_i}{n} - \frac{1}{m_0}\sum_{i: Z_i=0}\frac{\RY_i}{n}\\
    & = \frac{1}{m_1}\sum_{i: Z_i=1}F_Y(Y_i) - \frac{1}{m_0}\sum_{i: Z_i=0}F_Y(Y_i) + o_\P(1).
    \end{align*}
    Thus, 
    \[\hat{\beta}_{\sls} = \hat{\beta}_{\sls}^{*} + o_\P(1),\]
    where $\hat{\beta}_{\sls}^{*}$ is the 2SLS estimator with $F_{Y}(Y_i)$s being outcomes. The standard theory of 2SLS \citep{imbens1994identification} implies that 
    \[\hat{\beta}_{\sls}^{*}\stackrel{p}{\rightarrow} \E[F_Y(Y(1)) - F_Y(Y(0))\mid G = \tc]. \]
    By \eqref{eq:rate_alternative},
    \[\E[F_Y(Y(1))\mid G = \tc] = \rate(F_{Y(1)\mid \tc}, F_Y),\]
    and 
    \[\E[F_Y(Y(0))\mid G = \tc] = \rate(F_{Y(0)\mid \tc}, F_Y).\]    
    The proof for $\hat{\beta}_{\sls}$ is then completed. The limits of other estimators can be proved similarly. 
  \end{proof}

  \begin{proof}[\textbf{Proof of Theorem \ref{thm:complier_ranking}}]
  Following the same steps in the proof of Theorem \ref{thm:2sls}, we can show that 
  \[\hat{\beta}_{\sls, \zeta, \tc} \stackrel{p}{\rightarrow} \rate(F_{Y(1)\mid \tc}, \zeta \hat{F}_{Y(1)\mid \tc} + (1 - \zeta)\hat{F}_{Y(0)\mid \tc}) - \rate(F_{Y(1)\mid \tc}, \zeta \hat{F}_{Y(1)\mid \tc} + (1 - \zeta)\hat{F}_{Y(0)\mid \tc}).\]
  By reference robustness, the RHS equals the rank-LATE.
  \end{proof}
  
 \begin{proof}[\textbf{Proof of Theorem \ref{thm:complier_ranking_estimator}}]
    By Assumptions \ref{as:SUTVA} and \ref{as:IV},
  \begin{align*}
    &F_{11}(y) = \P(Y(1)\le y\mid W(1) = 1, Z = 1)\\
     & =\P(Y(1)\le y\mid W(1) = 1)\\
     & = \P(Y(1)\le y\mid G\in \{\ta, \tc\})\\
    & = \P(G = \ta\mid G\in \{\ta, \tc\})F_{Y(1)\mid \ta}(y) + \P(G = \tc\mid G\in \{\ta, \tc\})F_{Y(1)\mid \tc}(y)\\
    & = \frac{\pi_{\ta}}{\pi_{\ta} + \pi_{\tc}}F_{Y(1)\mid \ta}(y) + \frac{\pi_{\tc}}{\pi_{\ta} + \pi_{\tc}}F_{Y(1)\mid \tc}(y),
  \end{align*}
  and
  \begin{align*}
    &F_{10}(y) = \P(Y(1)\le y\mid W(0) = 1, Z = 0)\\
    & = \P(Y(1)\le y\mid W(0) = 1)\\
    & = \P(Y(1)\le y\mid G = \ta)\\
    & = F_{Y(1)\mid \ta}(y).
  \end{align*}
   As a result,
  \begin{equation}
    \label{eq:F1c}
    F_{Y(1)\mid \tc} = \frac{\pi_{\ta} + \pi_{\tc}}{\pi_{\tc}}\lb F_{11} - \frac{\pi_{\ta}}{\pi_{\ta} + \pi_{\tc}}F_{10}\rb.
  \end{equation}
  Similarly,
  \begin{equation}
    \label{eq:F0c}
    F_{Y(0)\mid \tc} = \frac{\pi_{\tn} + \pi_{\tc}}{\pi_{\tc}}\lb F_{00} - \frac{\pi_{\tn}}{\pi_{\tn} + \pi_{\tc}}F_{01}\rb.
  \end{equation}
  \end{proof}

  \subsection{Proofs for DiD}
  \begin{proof}[\textbf{Proof of Theorem \ref{thm:rank_did}}]
  By \eqref{eq:rank_approx} and SUTVA,
  \begin{align}
    \hat{\beta}_{\did}&\stackrel{p}{\rightarrow} \E[F_{Y_1}(Y_1)\mid W=1] - \E[F_{Y_1}(Y_1)\mid W=0] - (\E[F_{Y_0}(Y_0)\mid W=1] - \E[F_{Y_0}(Y_0)\mid W=0]).\label{eq:did_limit}
  \end{align}
  Since $F_{Y_1} = \pi F_{Y_1\mid W=1} + (1 - \pi)F_{Y_1\mid W=0}$ and $F_{Y_0} = \pi F_{Y_0\mid W=1} + (1 - \pi)F_{Y_0\mid W=0}$, by partial additivity and \eqref{eq:rate_alternative}, 
  \begin{align}
    &\E[F_{Y_1}(Y_1)\mid W=1] - \E[F_{Y_1}(Y_1)\mid W=0] = \rate(F_{Y_1\mid W=1}, F_{Y_1}) - \rate(F_{Y_1\mid W=0}, F_{Y_1})\nonumber\\
    & = \rate(F_{Y_1\mid W=1}, F_{Y_1\mid W=0}),   \label{eq:first_diff}
  \end{align}
  and
  \begin{align}
    &\E[F_{Y_0}(Y_0)\mid W=1] - \E[F_{Y_0}(Y_0)\mid W=0] = \rate(F_{Y_0\mid W=1}, F_{Y_0}) - \rate(F_{Y_0\mid W=0}, F_{Y_0})\nonumber\\
    & = \rate(F_{Y_0\mid W=1}, F_{Y_0\mid W=0}). \label{eq:second_diff}
    \end{align}
  By Assumption \ref{as:rank_PT},
  \begin{align*}
    &\rate(F_{Y_0\mid W=1}, F_{Y_0\mid W=0}) = \rate(F_{Y_0(0)\mid W=1}, F_{Y_0(0)\mid W=0}) = \rate(F_{Y_1(0)\mid W=1}, F_{Y_1(0)\mid W=0})
  \end{align*}
  Combining above pieces, we have
  \begin{align*}
    \hat{\beta}_{\did}&\stackrel{p}{\rightarrow} \rate(F_{Y_1\mid W=1}, F_{Y_1\mid W=0}) - \rate(F_{Y_1(0)\mid W=1}, F_{Y_1(0)\mid W=0})\\
    & = \rate(F_{Y_1(1)\mid W=1}, F_{Y_1(0)\mid W=0}) - \rate(F_{Y_1(0)\mid W=1}, F_{Y_1(0)\mid W=0}).
  \end{align*}
  Finally, using other ranks would not change \eqref{eq:first_diff} or \eqref{eq:second_diff} due to partial additivity. 
\end{proof}

\begin{proof}[\textbf{Proof of Theorem \ref{thm:rank_mdid}}]
  Similar to \eqref{eq:did_limit} in the proof of Theorem \ref{thm:rank_did},
  \begin{align*}
    \hat{\beta}_{\mdid}&\stackrel{p}{\rightarrow} \E[F_{Y_1(0)\mid W = 1}(Y_1)\mid W=1] - \E[F_{Y_1(0)\mid W=1}(Y_1)\mid W=0]\\
                       & \quad - (\E[F_{Y_0}(Y_0)\mid W=1] - \E[F_{Y_0}(Y_0)\mid W=0])\\
                       & = \beta^{*}_{\did} + \lb\frac{1}{2} - \E[F_{Y_1(0)\mid W=1}(Y_1(0))\mid W=0]\rb\\
                       & \quad - (\E[F_{Y_0}(Y_0)\mid W=1] - \E[F_{Y_0}(Y_0)\mid W=0])\\
                       & = \beta^{*}_{\did} + \rate(F_{Y_1(0)\mid W=1}, F_{Y_1(0)\mid W=0}) - \rate(F_{Y_0(0)\mid W=1}, F_{Y_0(0)\mid W=0})\\
    & = \beta^{*}_{\did},
  \end{align*}
  where the last step is due to Assumption \ref{as:rank_PT}.
\end{proof}

  \subsection{Proofs for RDD}
  We start by proving a lemma.
\begin{lemma}\label{lem:rank_rdd}
In the setting of Theorem \ref{thm:rank_rdd}, for any bounded random variable $Z_i$ such that $\E[Z \mid X=x]$ is continuous at $x=x^*$, as $n\rightarrow \infty$,
\[\frac{1}{n}\sum_{i:X_i\ge x^*}Z_i\cdot \frac{1}{h_n}K\lb\frac{X_i - x^*}{h_n}\rb = \E[Z \mid X=x^*]\cdot p(x^*) \int_{0}^\infty K(u)du + o_\P(1),\]
and 
\[\frac{1}{n}\sum_{i:X_i < x^*}Z_i\cdot \frac{1}{h_n}K\lb\frac{X_i - x^*}{h_n}\rb = \E[Z\mid X=x^*]\cdot p(x^*) \int_{-\infty}^0 K(u)du + o_\P(1).\]
\end{lemma}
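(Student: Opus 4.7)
The plan is to prove each of the two limits by the standard first-moment/second-moment route for kernel averages, using that $Z$ is bounded, $K$ has compact support, $\E[Z\mid X=x]$ is continuous at $x^{*}$, and $p$ is Lipschitz with $p(x^{*})\in(0,\infty)$. Since the two claims are symmetric (one integrates $K$ on $[0,\infty)$, the other on $(-\infty,0]$), I only sketch the first.

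First, I would treat the average as a mean of i.i.d.\ terms $T_i = I(X_i\ge x^{*})Z_i\cdot h_n^{-1}K((X_i-x^{*})/h_n)$ and compute $\E[T_1]$ by iterated expectations:
\[
\E[T_1] = \int_{x^{*}}^{\infty}\E[Z\mid X=x]\cdot \frac{1}{h_n}K\!\lb\frac{x-x^{*}}{h_n}\rb p(x)\,dx
= \int_{0}^{\infty}\E[Z\mid X=x^{*}+uh_n]\,K(u)\,p(x^{*}+uh_n)\,du,
\]
where I substitute $u=(x-x^{*})/h_n$. Since $K$ has bounded support, the integrand is supported on a fixed compact interval; combined with continuity of $\E[Z\mid X=\cdot]$ at $x^{*}$, Lipschitz continuity of $p$, and boundedness of $Z$ (so the integrand is dominated by a constant multiple of $|K(u)|$), dominated convergence gives
\[
\E[T_1]\;\longrightarrow\;\E[Z\mid X=x^{*}]\,p(x^{*})\int_{0}^{\infty}K(u)\,du.
\]

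Next, I would bound the variance of the sample mean by $\E[T_1^{2}]/n$. Repeating the same change of variables on $\E[T_1^{2}]$ produces the extra factor $h_n^{-1}$ outside the integral, and boundedness of $Z$ together with $\int K^{2}<\infty$ from Assumption \ref{as:kernel}(b) yields $\E[T_1^{2}]=O(1/h_n)$. Hence the variance is $O(1/(nh_n))$, which vanishes by hypothesis. Combining with the expectation computation via Chebyshev's inequality gives the first claim; the second follows by applying the same argument with $I(X\ge x^{*})$ replaced by $I(X<x^{*})$ and the substitution $u=(x-x^{*})/h_n$ producing the integral on $(-\infty,0]$.

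The only mild obstacle is verifying the uniform domination needed for dominated convergence near the boundary and making sure continuity of the conditional expectation is invoked correctly (one-sided at $x^{*}$ from the right, then from the left), but both are immediate from the compact support of $K$ and the stated regularity. No assumption beyond those already available in the statement of Lemma \ref{lem:rank_rdd} is needed.
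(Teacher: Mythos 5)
Your proposal is correct and follows essentially the same route as the paper's proof: compute the first moment by conditioning and the substitution $u=(x-x^{*})/h_n$, use continuity of $\E[Z\mid X=\cdot]$ and Lipschitz continuity of $p$ to identify the limit, then bound $\Var$ by $\E[T_1^2]/n=O(1/(nh_n))$ and conclude via Chebyshev. The only cosmetic difference is that you package the limit of the mean via dominated convergence while the paper argues with an explicit $\epsilon$ bound and an $O(h_n)$ Lipschitz correction; both are valid given the compact support of $K$ and $\int K^2<\infty$.
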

\begin{proof}
Without loss of generality, assume $K$ is supported on $[-1,1]$ and bounded by $B$ and $p$ is $L$-Lipschitz. Note that
\begin{align*}
&\E\left[Z I(X\ge x^*)\cdot \frac{1}{h_n}K\lb\frac{X - x^*}{h_n}\rb\right]\\
& = \E\left[Z I(X \in [x^*, x^* + h_n])\cdot \frac{1}{h_n}K\lb\frac{X - x^*}{h_n}\rb\right]\\
& = \E\left[\E[Z\mid X] I(X \in [x^*, x^* + h_n])\cdot \frac{1}{h_n}K\lb\frac{X - x^*}{h_n}\rb\right]\\
& = \E[Z\mid X=x^*]\cdot \E\left[I(X \in [x^*, x^* + h_n])\cdot \frac{1}{h_n}K\lb\frac{X - x^*}{h_n}\rb\right] \\
& \quad + \E\left[\lb \E[Z\mid X] - \E[Z\mid X=x^*]\rb I(X \in [x^*, x^* + h_n])\cdot \frac{1}{h_n}K\lb\frac{X - x^*}{h_n}\rb\right] 
\end{align*}
Since $\E[Z\mid X = x]$ is continuous at $x = x^{*}$ and $h_n\rightarrow 0$, for any $\epsilon > 0$, there exists a sufficiently large $n(\eps)$ such that, for any $n\ge n(\eps)$, $|\E[Z\mid X] - \E[Z\mid X=x^*]|\le \eps$. Thus,
\begin{align*}
&\E\left[Z I(X\ge x^*)\cdot \frac{1}{h_n}K\lb\frac{X - x^*}{h_n}\rb\right]\\
& = (\E[Z\mid X=x^*] + o(1))\cdot \E\left[I(X \in [x^*, x^* + h_n])\cdot \frac{1}{h_n}K\lb\frac{X - x^*}{h_n}\rb\right].
\end{align*}
Since $X$ is supported on $[-1, 1]$, 
\[\E\left[I(X \in [x^*, x^* + h_n])\cdot \frac{1}{h_n}K\lb\frac{X - x^*}{h_n}\rb\right] = \int_{0}^1 K(u)p(x^* + h_n u)du.\]
By Cauchy-Schwarz inequality, 
\[\int_{-1}^{1} u |K(u)|du \le \sqrt{\int_{-1}^{1}u^2du \int_{-1}^{1}K^2(u)du} < \infty.\]
Since $p$ is $L$-Lipschitz,
\[\Bigg|\int_{0}^1 K(u)(p(x^* + h_n u) - p(x^*))du \Bigg|\le Lh_n \int_{0}^1 u|K(u)|du =O(h_n) = o(1).\]
Thus, 
\[\E\left[Z I(X\ge x^*)\cdot \frac{1}{h_n}K\lb\frac{X - x^*}{h_n}\rb\right] = p(x^*)\int_0^1 K(u)du + o(1).\]
Since $\E[Z\mid X=x^{*}]$ and $p(x^*)$ are bounded, we have 
\begin{equation}\label{eq:mean_kernel}
\E\left[Z I(X\ge x^*)\cdot \frac{1}{h_n}K\lb\frac{X - x^*}{h_n}\rb\right] = \E[Z\mid X=x^*]\cdot p(x^*)\int_0^1 K(u)du + o(1).
\end{equation}
Now we compute the variance. Assume $|Z|\le C$. Then,
\begin{align*}
&\Var\left[Z I(X\ge x^*)\cdot \frac{1}{h_n}K\lb\frac{X - x^*}{h_n}\rb\right]\\
& \le \E\left[Z^2 I(X \in [x^*, x^* + h_n])\cdot \frac{1}{h_n^2}K^2\lb\frac{X - x^*}{h_n}\rb\right]\\
& \le C^2 \frac{1}{h_n}\int_{0}^{1}K^2(u)p(x^* + h_n u)du.
\end{align*}
Since $p$ is $L$-Lipschitz,
\[\Bigg|\int_{0}^{1}K^2(u)p(x^* + h_n u)du - \int_{0}^{1}K^2(u)p(x^*)du\Bigg|\le Lh_n \int_{0}^1 uK^2(u)du = O(h_n).\]
Thus, 
\begin{align}
&\Var\left[Z I(X\ge x^*)\cdot \frac{1}{h_n}K\lb\frac{X - x^*}{h_n}\rb\right]\le C^2 \lb \frac{p(x^*)}{h_n}\int_{0}^1 K^2(u)du + O(1)\rb = O\lb \frac{1}{h_n}\rb.\label{eq:var_kernel}
\end{align}
Combining \eqref{eq:mean_kernel} and \eqref{eq:var_kernel}, by Chebyshev's inequality, we obtain that 
\[\frac{1}{n}\sum_{i:X_i\ge x^*}Z_i\cdot \frac{1}{h_n}K\lb\frac{X_i - x^*}{h_n}\rb = \E[Z \mid X=x^*]\cdot p(x^*) \int_{0}^1 K(u)du + o_\P(1) + O_\P\lb\frac{1}{\sqrt{n h_n}}\rb.\]
Since $nh_n \rightarrow \infty$, the last terms collapse to $o_\P(1)$. The integral range can be replaced by $[0, \infty)$ because $K$ is supported on $[-1, 1]$. The proof for the other result is similar. 
\end{proof}

\begin{proof}[\textbf{Proof of Theorem \ref{thm:rank_rdd}}]
Without loss of generality, assume $K$ is supported on $[-1,1]$ and bounded by $B$. Let 
\[D_{\ge} = \frac{1}{n}\sum_{i: X_i\ge x^*}\frac{1}{h_n}K\lb\frac{X_i - x^*}{h_n}\rb, \quad D_{<} = \frac{1}{n}\sum_{i: X_i < x^*}\frac{1}{h_n}K\lb\frac{X_i - x^*}{h_n}\rb.\]
By Lemma \ref{lem:rank_rdd} with $Z = 1$,
\begin{equation}\label{eq:bounded_phat}
D_{\ge} = p(x^{*})\int_{0}^1 K(u)du + o_\P(1), \quad D_{<} = p(x^{*})\int_{-1}^0 K(u)du + o_\P(1).    
\end{equation}
Define the terms in the numerator as 
\[N_{\ge} = \frac{1}{n}\sum_{i:X_i\ge x^*}\frac{\RY_i}{n} \cdot \frac{1}{h_n}K\lb\frac{X_i - x^*}{h_n}\rb, \quad N_{<} = \frac{1}{n}\sum_{i:X_i < x^*}\frac{\RY_i}{n} \cdot \frac{1}{h_n}K\lb\frac{X_i - x^*}{h_n}\rb.\]
By \eqref{eq:rank_approx},
\begin{equation}\label{eq:N_N*}
N_{\ge} = N_{\ge}^{*} + O_\P\lb\frac{1}{\sqrt{n}}\rb, \quad N_{<} = N_{<}^{*} + O_\P\lb\frac{1}{\sqrt{n}}\rb
\end{equation}
where 
\[N_{\ge}^{*} = \frac{1}{n}\sum_{i:X_i\ge x^*}F_Y(Y_i) \cdot \frac{1}{h_n}K\lb\frac{X_i - x^*}{h_n}\rb, \quad N_{<}^{*} = \frac{1}{n}\sum_{i:X_i < x^*}F_Y(Y_i) \cdot \frac{1}{h_n}K\lb\frac{X_i - x^*}{h_n}\rb.\]
By the SUTVA assumption (Assumption \ref{as:SUTVA}) and the sharp RDD assumption,
\begin{align*}
F_{Y\mid X=x} &= I(x\ge x^{*}) \cdot F_{Y(1)\mid X=x} + I(x < x^{*})\cdot F_{Y(0)\mid X=x}.
\end{align*}
By Assumption \ref{as:RDD}, 
\[F_{Y\mid X=x} \stackrel{d}{\rightarrow} F_{Y(0)\mid X=x^{*}}, \quad \text{as }x\uparrow x^*,\]
and 
\[F_{Y\mid X=x} \stackrel{d}{\rightarrow} F_{Y(1)\mid X=x^{*}}, \quad \text{as }x\downarrow x^*.\]
Since $F_Y$ is bounded, weak convergences and \eqref{eq:rate_alternative} imply
\begin{align*}
& \lim_{x\uparrow x^*}\E[F_Y(Y)\mid X = x] = \rate(F_{Y(0)\mid X=x^*}, F_{Y}).
\end{align*}
Similarly,
\begin{align*}
&\lim_{x\downarrow x^*}\E[F_Y(Y)\mid X = x] =\rate(F_{Y(1)\mid X=x^*}, F_{Y}).
\end{align*}
By Lemma \ref{lem:rank_rdd} with $Z = F_Y(Y)$, 
\begin{align*}
&N_{\ge}^{*} = \rate(F_{Y(1)\mid X=x^*}, F_{Y}) \cdot p(x^{*})\int_{0}^{1}K(u)du + o_\P(1),
\end{align*}
and 
\begin{align*}
&N_{<}^{*} = \rate(F_{Y(0)\mid X=x^*}, F_{Y}) \cdot p(x^{*})\int_{-1}^{0}K(u)du + o_\P(1).
\end{align*}
By \eqref{eq:N_N*}, 
\begin{align*}
&N_{\ge} = \rate(F_{Y(1)\mid X=x^*}, F_{Y}) \cdot p(x^{*})\int_{0}^{1}K(u)du + o_\P(1),
\end{align*}
and 
\begin{align*}
&N_{<} = \rate(F_{Y(0)\mid X=x^*}, F_{Y}) \cdot p(x^{*})\int_{-1}^{0}K(u)du + o_\P(1).
\end{align*}
Thus, 
\[\hat{\beta}_{\rdd} = \frac{N_{\ge}}{D_{\ge}} - \frac{N_{<}}{D_{<}} \stackrel{p}{\rightarrow} \rate(F_{Y(1)\mid X=x^*}, F_{Y}) - \rate(F_{Y(0)\mid X=x^*}, F_{Y}).\]
The limits for other ranking methods can be derived similarly. 
\end{proof}

\begin{proof}[\textbf{Proof of Theorem \ref{thm:rank_mrdd}}]
Similar to the proof of Theorem \ref{thm:rank_rdd}, we assume $K$ is 
 supported on $[-1, 1]$. Let 
\[D = \frac{1}{n^2}\sum_{i: X_i \ge x^*}\sum_{j: X_j < x^*}\frac{1}{h_n}K\lb\frac{X_i - x^*}{h_n}\rb \frac{1}{h_n}K\lb\frac{X_j - x^*}{h_n}\rb,\]
and 
\[N = \frac{1}{n^2}\sum_{i: X_i \ge x^*}\sum_{j: X_j < x^*}\frac{1}{h_n}K\lb\frac{X_i - x^*}{h_n}\rb \frac{1}{h_n}K\lb\frac{X_j - x^*}{h_n}\rb I(Y_j\le Y_i).\]
By Lemma \ref{lem:rank_rdd},
\begin{align*}
D &= \lb\sum_{i: X_i \ge x^*}\frac{1}{nh_n}K\lb\frac{X_i - x^*}{h_n}\rb \rb\lb \sum_{j: X_j < x^*}\frac{1}{nh_n}K\lb\frac{X_j - x^*}{h_n}\rb \rb\\
& = p^2(x^{*}) \lb \int_{0}^{1}K(u)du \rb \lb \int_{-1}^{0}K(u)du \rb + o_\P(1)
\end{align*}
It is left to prove 
\begin{equation}\label{eq:goal_mrdd}
N = p^2(x^{*}) \lb \int_{0}^{1}K(u)du \rb \lb \int_{-1}^{0}K(u)du \rb \cdot \rate(F_{Y(1)\mid x^*}, F_{Y(0)\mid x^*})+ o_\P(1).
\end{equation}
We rewrite $N$ as a U-statistic:
\[N = \frac{1}{n^2}\sum_{i\neq j}\frac{1}{h_n}K\lb\frac{X_i - x^*}{h_n}\rb I(X_i \ge x^*) \cdot \frac{1}{h_n}K\lb\frac{X_j - x^*}{h_n}\rb I(X_j < x^*) \cdot I(Y_j\le Y_i).\]
First we compute the mean of $N$. Since $(X_i, Y_i)$ are i.i.d., 
\begin{align*}
&\E\left[\frac{n}{n-1}N\right]\\
& = \E \left[\frac{1}{h_n}K\lb\frac{X_i - x^*}{h_n}\rb I(X_i \ge x^*) \cdot \frac{1}{h_n}K\lb\frac{X_j - x^*}{h_n}\rb I(X_j < x^*) \cdot I(Y_j\le Y_i)\right]\\
& = \E \left[\frac{1}{h_n}K\lb\frac{X_i - x^*}{h_n}\rb I(X_i \ge x^*) \cdot \frac{1}{h_n}K\lb\frac{X_j - x^*}{h_n}\rb I(X_j < x^*) \cdot \P(Y_j\le Y_i\mid X_j, X_i)\right]\\
& = \E \left[\frac{1}{h_n}K\lb\frac{X_i - x^*}{h_n}\rb I(X_i \ge x^*) \cdot \frac{1}{h_n}K\lb\frac{X_j - x^*}{h_n}\rb I(X_j < x^*) \cdot \P(Y_j(0)\le Y_i(1)\mid X_j, X_i)\right]\\
& = \E \Bigg[\frac{1}{h_n}K\lb\frac{X_i - x^*}{h_n}\rb I(X_i \in [x^*, x^*+h_n]) \cdot \frac{1}{h_n}K\lb\frac{X_j - x^*}{h_n}\rb I(X_j \in [x^{*}-h_n, x^*)) \\
& \qquad \cdot \P(Y_j(0)\le Y_i(1)\mid X_j, X_i)\Bigg].
\end{align*}
Since $X_i, X_j\rightarrow x^{*}$ as $n\rightarrow \infty$, by Assumption \ref{as:RDD}, $(Y_j, Y_i)\mid (X_j, X_i)$ weakly converges to $(Y_j, Y_i)\mid X_j = X_i = x^{*}$. Thus, for any $\eps > 0$, there exists $n(\eps) > 0$ such that for any $n > n(\eps)$, 
\[\bigg|\P(Y_j(0)\le Y_i(1)\mid X_j, X_i) - \P(Y_j(0)\le Y_i(1)\mid X_j=X_i = x^{*})\bigg|\le \eps.\]
By definition, 
\[\P(Y_j(0)\le Y_i(1)\mid X_j=X_i = x^{*}) = \rate(F_{Y(1)\mid x^*}, F_{Y(0)\mid x^{*}}).\]
Following the same steps as in Lemma \ref{lem:rank_rdd}, 
\begin{align*}
&\E \Bigg[\frac{1}{h_n}K\lb\frac{X_i - x^*}{h_n}\rb I(X_i \in [x^*, x^*+h_n]) \cdot \frac{1}{h_n}K\lb\frac{X_j - x^*}{h_n}\rb I(X_j \in [x^{*}-h_n, x^*))\Bigg]\\
& = \E \Bigg[\frac{1}{h_n}K\lb\frac{X_i - x^*}{h_n}\rb I(X_i \in [x^*, x^*+h_n])\Bigg] \cdot \E\Bigg[\frac{1}{h_n}K\lb\frac{X_j - x^*}{h_n}\rb I(X_j \in [x^{*}-h_n, x^*))\Bigg]\\
& = p^2(x^*)\lb\int_{0}^{1}K(u)du\rb \lb\int_{-1}^{0}K(u)du\rb + o_\P(1).
\end{align*}
Thus, we have proved that 
\[\E\left[\frac{n}{n-1}N\right] = p^2(x^{*}) \lb \int_{0}^{1}K(u)du \rb \lb \int_{-1}^{0}K(u)du \rb \cdot \rate(F_{Y(1)\mid x^*}, F_{Y(0)\mid x^*})+ o_\P(1).\]
Hence, 
\[\E\left[N\right] = p^2(x^{*}) \lb \int_{0}^{1}K(u)du \rb \lb \int_{-1}^{0}K(u)du \rb \cdot \rate(F_{Y(1)\mid x^*}, F_{Y(0)\mid x^*})+ o_\P(1).\]
Next, we bound the variance of $N$. Let $U_{ij}$ be the summand:
\[U_{ij} = \frac{1}{h_n}K\lb\frac{X_i - x^*}{h_n}\rb I(X_i \ge x^*) \cdot \frac{1}{h_n}K\lb\frac{X_j - x^*}{h_n}\rb I(X_j < x^*) \cdot I(Y_j\le Y_i).\]
Clearly, $\Cov(U_{ij}, U_{k\ell}) = 0$ whenever $i,j,k, \ell$ are mutually distinct. For any $i\neq j$,
\begin{align*}
\sqrt{\E[|U_{ij}|]} \le \E\left[\frac{1}{h_n}K\lb\frac{X - x^{*}}{h_n}\rb\right] = O(1),
\end{align*}
where the last equality follows the same steps in the proof of Lemma \ref{lem:rank_rdd}, and 
\begin{align*}
&\sqrt{\Var(U_{ij})}\le \sqrt{\E[U_{ij}^2]} \\
&\le \E\left[\frac{1}{h_n^2}K^2\lb\frac{X - x^{*}}{h_n}\rb\right]\\
& = \frac{1}{h_n}\int_{-1}^{1}K^2(u)p(x^{*} + h_n u)du\\
& = \frac{p(x^*)}{h_n}\int_{-1}^{1}K^2(u)du + O\lb L\int_{-1}^{1}|u|K^2(u)du\rb\\
& = O\lb \frac{1}{h_n}\rb,
\end{align*}
where the second to last line applies Assumption \ref{as:kernel}. Thus, 
\[\E[|U_{ij}|] = O(1), \quad \Var(U_{ij}) = O\lb \frac{1}{h_n^2}\rb.\]
For any mutually distinct $i,j,k$, 
\[\E[|U_{ij}|\cdot |U_{ik}|] \le \E\left[\frac{1}{h_n^2}K^2\lb\frac{X - x^{*}}{h_n}\rb\right]\cdot \E^2\left[\frac{1}{h_n}K\lb\frac{X - x^{*}}{h_n}\rb\right] = O\lb \frac{1}{h_n}\rb.\]
As a result, 
\[|\Cov(U_{ij}, U_{ik})| \le \E[|U_{ij}|\cdot |U_{ik}|] + \E^2[U_{ij}] = O\lb\frac{1}{h_n} + 1\rb = O\lb\frac{1}{h_n}\rb.\]
Putting pieces together, we obtain that 
\[\Var(N) = O\lb \frac{1}{n^2}\Var(U_{ij}) + \frac{1}{n}|\Cov(U_{ij}, U_{ik})|\rb = O\lb \frac{1}{(nh_n)^2} + \frac{1}{nh_n}\rb = o(1).\]
By Chebyshev inequality, 
\[N = \E[N] + O_\P(\sqrt{\Var(N)}) = \E[N] + o_\P(1).\]
Thus, \eqref{eq:goal_mrdd} is proved. 
\end{proof}
\end{document}